\documentclass[letterpaper, 10 pt, conference]{ieeeconf}
\IEEEoverridecommandlockouts
\usepackage{graphicx}
\usepackage{hyperref}
\usepackage{amsmath,amssymb}
\usepackage{flushend}
\usepackage{color}
\usepackage{latexsym}\usepackage{comment}
\usepackage{url}
\usepackage{graphics}
\usepackage{pdfsync}
\usepackage{stfloats}
\usepackage{epsfig}
\usepackage{times}
\usepackage{amssymb}
\usepackage{amsmath}
\usepackage{amsfonts}
\usepackage{textcomp}
\usepackage{xcolor}
\usepackage{indentfirst}
\usepackage{epstopdf}
\usepackage{multirow}
\usepackage{float}
\usepackage{color}
\usepackage{enumerate}
\usepackage{subfigure}
\usepackage{comment}
\usepackage{cite}
\usepackage{caption}
\usepackage{bm}

\usepackage{tikz}
\usetikzlibrary{calc,positioning,decorations.pathreplacing}

\usepackage{algorithm}
\usepackage{algpseudocode}

\newcommand{\ba}{\begin{array}}
	\newcommand{\ea}{\end{array}}
\newcommand{\be}{\begin{equation}}
\newcommand{\ee}{\end{equation}}
\newcommand{\bea}{\begin{eqnarray}}
\newcommand{\eea}{\end{eqnarray}}
\newcommand{\bean}{\begin{eqnarray*}}
	\newcommand{\eean}{\end{eqnarray*}}
\newcommand{\bc}{\begin{center}}
	\newcommand{\ec}{\end{center}}

\title{\LARGE \bf Convergence and Robustness of Value and Policy Iteration for the Linear Quadratic Regulator}
\author{Bowen~Song, Chenxuan~Wu, Andrea~Iannelli
	\thanks{Bowen Song acknowledges the support of the International Max Planck Research School for Intelligent Systems (IMPRS-IS). Andrea Iannelli acknowledges the German Research Foundation (DFG) for support of this work under Germany’s Excellence Strategy - EXC 2075 – 390740016.
	}
	\thanks{The authors are with the University of Stuttgart, Institute for Systems Theory and Automatic Control, 70550 Stuttgart, Germany
		{\tt\small  \{bowen.song, andrea.iannelli\}@ist.uni-stuttgart.de},
  {\tt\small  \{st176873@stud.uni-stuttgart.de\}}. }
}

\newtheorem{Theorem}{Theorem}
\newtheorem{Definition}{Definition}
\newtheorem{Lemma}{Lemma}
\newtheorem{Remark}{Remark}
\newtheorem{Corollary}{Corollary}

\begin{document}
	
	\maketitle

\begin{abstract}
This paper revisits and extends the convergence and robustness properties of value and policy iteration algorithms for discrete-time linear quadratic regulator problems. In the model-based case, we extend current results concerning the region of exponential convergence of both algorithms. In the case where there is uncertainty on the value of the system matrices, we provide input-to-state stability results capturing the effect of model parameter uncertainties. Our findings offer new insights into these algorithms at the heart of several approximate dynamic programming schemes, highlighting their convergence and robustness behaviors. Numerical examples illustrate the significance of some of the theoretical results.
\end{abstract}

\section{Introduction}
Approximate dynamic programming (ADP) \cite{7927616,bertsekas2022abstract,lewis2013reinforcement} is a powerful algorithmic approach designed to solve sequential decision-making problems across a broad range of applications. Two fundamental approaches in ADP are: value iteration (VI) and policy iteration (PI), which have been extensively analyzed in the literature \cite{bertsekas2019reinforcement,7362040}. VI updates the value function of the underlying optimal control problem iteratively \cite{doi:10.1137/S0363012995291609,BERTSEKAS2020100003}, while PI evaluates and improves policies sequentially \cite{doi:10.1137/S0363012902399824}. Convergence properties of these two algorithms have been studied in works such as \cite{9790821} and \cite{SemiCone}. In \cite{9790821}, the convergence rates of VI and PI are compared for a finite state and action Markov decision problem. In contrast, \cite{SemiCone} investigates the conditions for asymptotic and exponential convergence of VI and PI in the context of the discrete-time linear quadratic regulator (LQR) problem. ADP's application to the LQR problem has received substantial attention due to its analytically tractable nature, making it an ideal benchmark for studying ADP in environments with continuous state and action spaces \cite{pmlr-v119-park20c,NEURIPS2019_aaebdb8b}. Studying the performance of VI and PI for the LQR problem is an active area of research \cite{SemiCone,9511623,9762997,FAN2024105731}. 

Typically, performing VI or PI requires knowledge of the system model, which is where most existing theoretical results are established \cite{SemiCone,1099755,pmlr-v119-park20c}. However, the system model is often unavailable in practice. To address this, data can be used to either identify a model and based on the estimate apply the algorithm (indirect data-driven control) or directly design the controller (direct data-driven Control). In \cite{IJRNC}, PI is combined with online model estimation, while \cite{9691800} proposes a direct formulation of PI using online data, bypassing model estimation. For both approaches to VI and PI, analyzing the robustness of the algorithms is crucial due to unavoidable uncertainty associated with the use of finite and potentially noisy data, which can introduce estimation errors that affect the controller's design. In \cite{RobustPI,PANG2021240}, the robustness of PI applied to continuous-time LQR problems and stochastic LQR problems is analyzed, respectively. Inspired by \cite{RobustPI}, our previous work \cite{IJRNC} extended this analysis to the robustness of PI for discrete-time LQR problems.

In this work, we investigate the nominal (i.e. with known model) exponential convergence and robustness of VI and PI applied to the discrete-time LQR problem. For the nominal case, we extend the standard conditions for exponential convergence of VI and PI algorithms as provided in \cite{SemiCone}. Building on these results, we analyze the robustness to model uncertainties of VI and PI, 
an aspect that has been explored for PI under different uncertainty structures in \cite{RobustPI, PANG2021240}, but not for VI.
Specifically, we study the performance of two algorithms when estimates of the true model are used, and we analyze the effect of uncertainty on the convergence properties. A motivating example for this analysis is the use of online identification routines providing at each iteration of the PI/VI algorithm a different estimate, which differs from the true one by a certain amount. We show that both VI and PI algorithms have inherent robustness to uncertainties within specific bounds. This property is crucial for the reliable deployment of indirect VI and PI algorithms, where handling uncertainties and estimation errors should be considered. 

The paper is organized as follows. Section \ref{sec:P} introduces the problem setting and some preliminaries. Section \ref{sec:C} and Section \ref{Robust} detail the exponential convergence and robustness analysis for both VI and PI algorithms, respectively. Section \ref{sec:S} provides simulations to illustrate some theoretical examples. Section \ref{conclusion} concludes the work. 

\emph{Notations:} We denote by $A\succeq 0$ and $A\succ0$ a positive semidefinite and positive definite matrix $A$, respectively. For matrices, $\lVert \cdot\rVert_F$ and $\lVert \cdot\rVert$ denote respectively their Frobenius norm and induced $2$-norm. Given positive definite matrix $P_\epsilon \preceq I$\cite[Lemma 6]{SemiCone}, we define the induced $P_\epsilon$-norm for matrices of compatible dimension as: $\| \cdot \|_{P_\epsilon} := \sqrt{\lambda_{max}((\cdot)^{\top}P_\epsilon(\cdot))}$.  For $X \in \mathbb{R}^{m \times n}$, we define $vec(X) := [X_{1}^\top, ..., X_{n}^\top]^\top$, where $X_i$ is the $i$-th column of matrix $X$. Kronecker product is represented as $\otimes$. For $Y \in \mathbb{R}^{m \times n}$ and $r>0$, we define $\mathcal{B}_{r}(Y) := \{X \in \mathbb{R}^{m \times n} | \|X-Y\|_F < r\}$. 
A sequence $\{Y_i\}$ is a map $\mathbb{Z}_+\rightarrow \mathbb{R}^{n \times m}$. 
For bounded scalar sequences, we denote by $\lVert Y \rVert_\infty:=\sup\limits_{i\in{\mathbb{Z}_+}}\{Y_i\}$.

\section{Preliminaries and Problem Setting}\label{sec:P}
We consider discrete-time linear time-invariant systems of the form:
\begin{equation}\label{LTI}
  x_{t+1}=Ax_t+Bu_t,
\end{equation}
where $x_t\in \mathbb{R}^{n_x}$ is the system state, $u_t\in \mathbb{R}^{n_u}$ is the control input, $t$ is the timestep, and pair $(A,B)$ is stabilizable. The objective is to design a state-feedback controller $u_t=Kx_t$ that minimizes the following infinite horizon cost:
\begin{equation}\label{Cost}
  J(x_t,K)=\sum\limits_{k=t}^{+\infty} r(x_k,u_k)=\sum\limits_{k=t}^{+\infty} x_k^\top Qx_k+u_k^\top Ru_k,
\end{equation}
where $R\succ 0$ and $Q\succeq 0$. 
Given a linear state-feedback gain $K$ that is stabilizing (i.e. $A+BK$ is Schur stable), the corresponding cost $J(x_t, K)$  can be expressed as $x_t^\top Px_t$, where $P\succ 0$ is also called the quadratic kernel of the cost function associated with $K$ \cite{9691800}. Starting from \eqref{Cost} and using the principle of optimality, $P$ can be calculated by solving the model-based Bellman equation \cite{IJRNC}:
\begin{equation}\label{MBBE}
  P=Q+K^\top RK+(A+BK)^\top P(A+BK).
\end{equation} 

We introduce the following definition related to the gain $K$ and the quadratic kernel $P$:
\begin{Definition} [Stability of gain $K$ and kernel $P$]\label{Def1}
    A control gain $K$ is said to be stabilizing if $(A+BK)$ is Schur stable. A positive semi-definite matrix $P$ is said to be stabilizing if the gain $K=-(R+B^\top P B)^{-1} B^\top P A$ is stabilizing.
\end{Definition}
It is a well-known result \cite{lewis2012optimal} that the optimal controller solution to the LQR problem is a linear state-feedback, and the optimal feedback gain $K^*$ is obtained via:
\begin{subequations}\label{Kpolicyimprovement_EQ}
\begin{align}
K^*&=-(R+B^\top P^*B)^{-1}B^\top P^*A, \label{Kpolicyimprovement} \\
P^*&=Q+A^\top P^*A-A^\top P^*B(R+B^\top P^*B)^{-1}B^\top P^*A. \label{DARE} 
\end{align}
\end{subequations}
Here, $P^*$ is the quadratic kernel of the value function, i.e. of the cost associated with the optimal gain $K^*$, and is the unique solution of the discrete algebraic Riccati equation (DARE) \eqref{DARE}. The optimal gain $K^*$ is stabilizing. Therefore, based on Definition \ref{Def1}, $P^*$ is stabilizing.

Solving \eqref{DARE} directly is challenging, especially when dealing with a high number of system states. Value iteration and policy iteration offer an effective iterative approach to find the optimal gain $K^*$ and are introduced in the following subsections.

\subsection{Value Iteration (VI)}
The procedure of value iteration is summarized in Algorithm \ref{Algo1},  which requires knowledge of the system matrices $A$ and $B$ and only uses matrix multiplication to update the cost function. In the value iteration, the kernel $P_i$ is iteratively updated based on \eqref{DARE}, treating $P_i$ as the kernel of value function.

\vspace{1cm}

\begin{algorithm}[H]
  \caption{Value Iteration}\label{Algo1}
  \begin{algorithmic}
      \Require $A,B$, a positive semidefinite kernel $P_0$ 
      \For{$i=0,...,+\infty$}
        \State \textbf{Update the kernel $P_{i}$} 
        \State $P_{i+1}=Q+A^\top P_i A-A^\top P_i B (R+B^\top P_i B)^{-1} B^\top P_i A$
      \EndFor
  \end{algorithmic}
\end{algorithm}
The properties of Algorithm \ref{Algo1} are summarized in the following theorem.
\begin{Theorem}{Properties of VI \cite{SemiCone}} \label{theoremVIBasic}
 \\
If the system dynamics $(A,B)$ are stabilizable, then for all $P_0 \succeq 0$:
  \begin{enumerate}
    \item $\lim\limits_{i\rightarrow\infty}P_i=P^*$, thus the sequence $\{P_i\}$ converges asymptotically to $P^*$;
    \item if $P_0 \succeq P^*$, then $\lVert P_{i+1}-P^* \rVert_{P_\epsilon} \leq d \lVert P_{i}-P^* \rVert_{P_\epsilon}$ with $d\in(0,1)$, $\forall i \in \mathbb{Z}_+$. Thus, the sequence $\{P_i\}$ converges exponentially to $P^*$.
  \end{enumerate}
\end{Theorem}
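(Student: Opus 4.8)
The plan is to treat the VI update as a fixed-point iteration around $P^*$ and analyze the error dynamics. Define the Riccati operator $\mathcal{R}(P) := Q + A^\top P A - A^\top P B(R+B^\top P B)^{-1}B^\top P A$, so that $P_{i+1} = \mathcal{R}(P_i)$ and $P^* = \mathcal{R}(P^*)$. A convenient reformulation, obtained by completing the square, is that for the gain $K_i := -(R+B^\top P_i B)^{-1}B^\top P_i A$ associated with $P_i$, one has $\mathcal{R}(P_i) = Q + K_i^\top R K_i + (A+BK_i)^\top P_i (A+BK_i)$, and moreover $\mathcal{R}(P) \le Q + K^\top R K + (A+BK)^\top P (A+BK)$ for \emph{any} gain $K$, with equality at $K = -(R+B^\top P B)^{-1}B^\top P A$. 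This variational characterization is the main engine.

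For part (1), asymptotic convergence, I would first treat the monotone case: if $P_0 \succeq P^*$ one shows by induction that $P_i \succeq P_{i+1} \succeq P^*$ (using the variational inequality with $K = K_i$ to compare $\mathcal{R}(P_i)$ and $\mathcal{R}(P_{i+1})$, and with $K = K^*$ to retain the lower bound $P^*$), so $\{P_i\}$ is a monotonically nonincreasing sequence bounded below, hence convergent, and the limit is a fixed point of $\mathcal{R}$ that is $\succeq P^*$; uniqueness of the stabilizing solution of the DARE (invoked from the excerpt) forces the limit to be $P^*$. The case $P_0 \succeq 0$ arbitrary is then handled by a sandwiching argument: pick $\bar P \succeq P^*$ with $\bar P \succeq P_0$ (e.g. any sufficiently large multiple of the identity dominating both) and also use the underlying stabilizability to produce a lower comparison sequence, then invoke monotonicity of $\mathcal{R}$ (which follows from the variational inequality: $P \preceq P'$ implies $\mathcal{R}(P) \preceq \mathcal{R}(P')$) to squeeze $P_i$ between two sequences both converging to $P^*$. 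Since this is the ``basic'' VI theorem cited verbatim from \cite{SemiCone}, I would likely just cite the monotone-convergence argument there rather than reproduce it.

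For part (2), exponential convergence from $P_0 \succeq P^*$, the key is to linearize. Writing $\Delta_i := P_i - P^* \succeq 0$, a direct subtraction of the two closed-form expressions, using $K^*$ optimal for $P^*$ and $K_i$ optimal for $P_i$, gives the two-sided estimate $(A+BK^*)^\top \Delta_i (A+BK^*) \succeq \Delta_{i+1} \succeq 0$ — the upper bound coming from evaluating $\mathcal{R}(P^*)$ suboptimally at $K_i$ and $\mathcal{R}(P_i)$ at its own optimum $K_i$ (or vice versa, choosing whichever sign works), so that $\Delta_{i+1} = \mathcal{R}(P_i) - \mathcal{R}(P^*) \preceq (A+BK_i)^\top \Delta_i (A+BK_i)$ and a symmetric lower comparison; combining the monotonicity $\Delta_{i+1} \preceq \Delta_i$ from part (1) with the gain converging to $K^*$, one can replace $K_i$ by $K^*$ up to controllable higher-order terms, or argue uniformly that $A+BK_i$ lies in a compact set of Schur matrices. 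Because $A+BK^*$ is Schur, there exists $P_\epsilon$ with $0 \prec P_\epsilon \preceq I$ and a contraction factor $d \in (0,1)$ such that $(A+BK^*)^\top P_\epsilon (A+BK^*) \preceq d^2 P_\epsilon$ (this is exactly the Lyapunov-type object from \cite[Lemma 6]{SemiCone} that motivates the $P_\epsilon$-norm in the Notations). Feeding the error recursion into the definition $\|\Delta_i\|_{P_\epsilon} = \sqrt{\lambda_{\max}(\Delta_i^\top P_\epsilon \Delta_i)}$ and using that for $0 \preceq \Delta_{i+1} \preceq M^\top \Delta_i M$ one has $\|\Delta_{i+1}\|_{P_\epsilon} \le \|M\|_{P_\epsilon}\,\|\Delta_i\|_{P_\epsilon}$ type bounds yields $\|\Delta_{i+1}\|_{P_\epsilon} \le d\,\|\Delta_i\|_{P_\epsilon}$.

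The main obstacle is part (2): one must be careful that the contraction is with the \emph{fixed} matrix $A+BK^*$ rather than the iteration-dependent $A+BK_i$, since only $A+BK^*$ comes with the clean $P_\epsilon$ Lyapunov inequality. The cleanest route is to note that monotonicity gives $P^* \preceq P_i \preceq P_0$ for all $i$, hence the $K_i$ range over a compact set and the extra terms involving $K_i - K^*$ are dominated by $\Delta_i$ itself; alternatively, since $\Delta_{i+1} \preceq (A+BK_i)^\top \Delta_i (A+BK_i)$ and separately one can show $\Delta_{i+1} \preceq (A+BK^*)^\top \Delta_i (A+BK^*) + (\text{terms} \succeq 0 \text{ only in the other direction})$, a short argument isolates the desired inequality $\Delta_{i+1} \preceq (A+BK^*)^\top \Delta_i (A+BK^*)$ exactly, which is in fact the sharp bound. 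Everything else — monotonicity, boundedness, the Lyapunov inequality for a Schur matrix, and translating matrix inequalities into $P_\epsilon$-norm inequalities — is routine. Since the statement is quoted as a known result from \cite{SemiCone}, I would present the proof concisely and defer the purely computational comparisons to that reference.
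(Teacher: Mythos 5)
Your proposal is correct and follows essentially the same route the paper itself relies on: Theorem \ref{theoremVIBasic} is quoted from \cite{SemiCone} without proof, but the identical machinery (the variational/completion-of-squares characterization of the Riccati operator, the two-sided comparison $\mathcal{A}(P_i)^{\top}(P_i-P^*)\mathcal{A}(P_i) \preceq P_{i+1}-P^* \preceq \mathcal{A}(P^*)^{\top}(P_i-P^*)\mathcal{A}(P^*)$, and the $P_\epsilon$-norm contraction via \cite[Lemma 6]{SemiCone}) is exactly what the paper reproduces in its proof of Theorem \ref{LemmaVI}. The only nitpick is that your "main obstacle" dissolves once you note the upper bound is obtained by evaluating $\mathcal{T}(P_i)$ suboptimally at the \emph{fixed} gain $K^*$ (so the contraction matrix is $A+BK^*$ with factor $d=\lVert \mathcal{A}(P^*)\rVert_{P_\epsilon}^{2}$, no uniformity-over-$K_i$ argument needed), and that the correct norm bound carries $\lVert M\rVert_{P_\epsilon}^{2}$ rather than $\lVert M\rVert_{P_\epsilon}$.
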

\subsection{Policy Iteration (PI)}
The basic version of the policy iteration algorithm \cite{1099755} is summarized in Algorithm \ref{Algo2}. The PI algorithm is more complex than the VI algorithm, as the VI only requires the matrix multiplication and inversion, while PI involves solving the Lyapunov equation in the policy evaluation step. In the policy evaluation phase, the performance of $K_i$ is evaluated by using \eqref{MBBE}. In the policy improvement phase, the policy is improved by treating the evaluation $P_i$ as the kernel of value function and using \eqref{Kpolicyimprovement}. 
\begin{algorithm}[H]
  \caption{Policy Iteration}\label{Algo2}
  \begin{algorithmic}
      \Require $A,B$, a stabilizing policy gain $K_0$ 
      \For{$i=0,...,+\infty$}
        \State \textbf{Policy Evaluation: find $P_{i}$} 
        \State $P_{i}=Q+K_i^\top RK_{i}+(A+BK_{i})^\top P_{i}(A+BK_{i})$
        \State \textbf{Policy Improvement: update gain $K_{i+1}$}
        \State $K_{i+1}=-(R+B^\top P_iB)^{-1}B^\top P_iA$
      \EndFor
  \end{algorithmic}
\end{algorithm}
The properties of Algorithm \ref{Algo2} are summarized below.
\begin{Theorem}{Properties of PI \cite{1099755}\cite[Theorem 4]{IJRNC}} \label{theoremPIBasic}
If $K_0$ stabilizes \eqref{LTI}, then 
  \begin{enumerate}
    \item $P_0\succeq P_1 \succeq ... \succeq P^*$;
    \item $K_i$ stabilizes the system $(A,B)$, $\forall i \in \mathbb{Z}_+$;
    \item $\lim\limits_{i\rightarrow\infty}P_i=P^*$, $\lim\limits_{i\rightarrow\infty}K_i=K^*$;
    \item $\lVert P_{i+1}-P^* \rVert_F \leq c \lVert P_{i}-P^* \rVert_F$ with $c\in(0,1)$, $\forall i \in \mathbb{Z}_+$. Thus, the sequence $\{P_i\}$ converges exponentially to $P^*$.
  \end{enumerate}
\end{Theorem}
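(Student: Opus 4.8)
The plan is to establish the four items in order, handling items~1 and~2 jointly by induction on $i$, item~3 via monotone convergence, and item~4 by pairing the monotonicity with a local superlinear estimate and a compactness-type argument on the residual ratios.

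\emph{Items 1 and 2.} The workhorse is the completion-of-squares rewriting of the policy-evaluation equation: with $A_i := A + BK_i$ and $S_i := R + B^\top P_i B \succ 0$, and using $K_{i+1} = -S_i^{-1}B^\top P_i A$, the equation for $P_i$ becomes
\[
P_i - A_{i+1}^\top P_i A_{i+1} \;=\; Q + K_{i+1}^\top R K_{i+1} + (K_i - K_{i+1})^\top S_i (K_i - K_{i+1}) \;=:\; W_i \succeq 0 .
\]
I would induct: $K_0$ stabilizing gives $P_0 \succeq 0$ well-defined; assuming $K_i$ stabilizing and $P_i$ well-defined, suppose $A_{i+1}v = \lambda v$ with $v \neq 0$, $|\lambda| \ge 1$. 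Then $v^* W_i v = (1-|\lambda|^2)\, v^* P_i v \le 0$, so $v^* W_i v = 0$; since each summand of $W_i$ is positive semidefinite and $R, S_i \succ 0$, this forces $K_{i+1}v = 0$ and $K_i v = 0$, whence $A_i v = Av = A_{i+1}v = \lambda v$, contradicting that $A_i$ is Schur. Hence $K_{i+1}$ is stabilizing (item~2) and $P_{i+1}$ is well-defined. Subtracting the Lyapunov equation for $P_{i+1}$ from the display and solving the Lyapunov equation driven by the Schur matrix $A_{i+1}$ gives $P_i - P_{i+1} = \sum_{j\ge 0} (A_{i+1}^j)^\top (K_i-K_{i+1})^\top S_i (K_i-K_{i+1}) A_{i+1}^j \succeq 0$, i.e. the monotone decrease in item~1; and $P_i \succeq P^*$ follows because $x^\top P_i x = J(x,K_i) \ge J(x,K^*) = x^\top P^* x$ for the stabilizing gain $K_i$.

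\emph{Item 3.} The chain $P_0 \succeq P_1 \succeq \cdots \succeq P^* \succeq 0$ makes $\{P_i\}$ monotone non-increasing and bounded below, hence convergent to some $P_\infty$; by continuity of $P \mapsto -(R+B^\top P B)^{-1}B^\top P A$ the gains converge, $K_i \to K_\infty$, and passing to the limit in the policy-evaluation and policy-improvement equations shows $P_\infty \succeq 0$ satisfies the DARE~\eqref{DARE}. Since $P^*$ is its unique solution, $P_\infty = P^*$, and then $K_\infty = -(R+B^\top P^* B)^{-1}B^\top P^* A = K^*$.

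\emph{Item 4 (the main obstacle).} Item~1 gives $0 \preceq P_{i+1}-P^* \preceq P_i - P^*$, and $0 \preceq X \preceq Y$ implies $\|X\|_F \le \|Y\|_F$ (via $\mathrm{tr}(X^2) \le \mathrm{tr}(XY) \le \mathrm{tr}(Y^2)$), so each ratio $r_i := \|P_{i+1}-P^*\|_F / \|P_i - P^*\|_F$ is at most $1$; I need a uniform $c < 1$. First, $r_i < 1$ strictly whenever $P_i \neq P^*$: if $r_i = 1$, then $\mathrm{tr}\big((P_i-P_{i+1})(P_i+P_{i+1}-2P^*)\big) = 0$ with both factors positive semidefinite, which forces $P_{i+1} = P_i$; by the series formula above this means $(K_i-K_{i+1})^\top S_i (K_i-K_{i+1}) = 0$, i.e. $K_{i+1}=K_i$, so $(P_i,K_i)$ solves the DARE, giving $P_i = P^*$ — a contradiction (and if $P_i = P^*$ ever holds the sequence has already converged and the bound is trivial). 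Second, $r_i \to 0$: one step of policy iteration is exactly Newton's method applied to the smooth residual $\mathcal F(P) := \mathcal R(P) - P$ of the DARE, so the update map $T : P_i \mapsto P_{i+1}$ is well-defined and smooth near $P^*$ with $T(P^*) = P^*$ and Fréchet derivative $DT(P^*) = 0$, hence $\|P_{i+1}-P^*\|_F = o(\|P_i-P^*\|_F)$. Therefore $\{r_i\}$ lies in $[0,1)$ and tends to $0$, so at most finitely many $r_i$ exceed $1/2$ and $c := \max\{1/2, r_{i_1}, \dots, r_{i_m}\} < 1$ works uniformly. The hard part is precisely this item: the strict-decrease step through the Frobenius-norm equality case and the identification of policy iteration with Newton's method (so that $DT(P^*) = 0$) are where the genuine work lies, while items~1--3 reduce to careful bookkeeping with the completion-of-squares identity and monotonicity.
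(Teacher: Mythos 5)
The paper itself gives no proof of Theorem \ref{theoremPIBasic} --- it is quoted from Hewer's classical result \cite{1099755} and from \cite{IJRNC} --- so your attempt should be judged against that classical argument and against the closely related computation the authors do carry out in Appendix A for Theorem \ref{LemmaPI}. Your items 1--3 are exactly the standard Hewer induction and are correct: the completion-of-squares identity $P_i-A_{i+1}^\top P_i A_{i+1}=Q+K_{i+1}^\top RK_{i+1}+(K_i-K_{i+1})^\top S_i(K_i-K_{i+1})$, the eigenvector argument (the step $v^*W_iv=0\Rightarrow K_{i+1}v=K_iv=0$ uses only $R,S_i\succ0$, so $Q\succeq0$ suffices), the Lyapunov series giving $P_i-P_{i+1}\succeq0$, and the monotone-limit argument. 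The one soft spot in item 3 is that $A+BK_\infty$ is a limit of Schur matrices and hence only has spectral radius $\leq 1$, so concluding $P_\infty=P^*$ really rests on uniqueness (equivalently, maximality) of the stabilizing DARE solution; since the paper explicitly asserts $P^*$ is the unique DARE solution, this is acceptable, but it deserves a sentence. For item 4 your packaging differs from the paper's machinery: you obtain a uniform $c<1$ by combining $r_i\leq1$ (from $0\preceq P_{i+1}-P^*\preceq P_i-P^*$), strictness via the Frobenius equality case (your trace argument is sound: $(P_i-P_{i+1})(P_i+P_{i+1}-2P^*)=0$ forces $\mathrm{tr}\bigl((P_i-P_{i+1})^2\bigr)\leq0$, hence $P_{i+1}=P_i$ and then $P_i=P^*$ by uniqueness of the stabilizing solution), and $r_i\to0$ by superlinearity, then taking the maximum over the finitely many ratios exceeding $1/2$. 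The only asserted-but-not-proved ingredient is the superlinear step, namely the identification of PI with Newton's method on the DARE residual so that the iteration map has zero derivative at $P^*$. That identification is precisely Hewer's theorem, and the paper's Appendix A derives the equivalent quantitative bound $\lVert P^*-P_{i+1}\rVert_F\leq a_0(P^*)a_1(P_i)\lVert P^*-P_i\rVert_F^2$ directly by vectorizing the Lyapunov identity \eqref{14}; citing Hewer or splicing in that short estimate makes your item 4 self-contained, and with that your proof is complete. Note finally that your $c$ depends on the particular sequence (hence on $K_0$), which is consistent with how the theorem is stated and later used in Theorem \ref{LemmaPI}. What your route buys is a purely qualitative derivation of the uniform rate from monotonicity plus local superlinearity; what the explicit vectorized bound buys (and why the paper develops it) is a constant that can be tracked and reused in the robustness analysis of Section \ref{Robust}.
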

From Theorem \ref{theoremVIBasic}, the asymptotic convergence of VI is guaranteed by $P_0\succeq 0$, while the exponential convergence of VI can be only guaranteed under the condition $P_0 \succeq P^*$. From Theorem \ref{theoremPIBasic}, the exponential convergence of PI is guaranteed by initializing with a stabilizing policy gain $K_0$. 
\section{Convergence Analysis of VI and PI}\label{sec:C}
In this section, we relax the conditions for the exponential convergence properties of VI and PI algorithms. We begin by deriving the following lemma which combines the continuity of matrix eigenvalues \cite[Chapter 6]{meyer2023matrix} with the stability property of $P^*$ discussed earlier:

\begin{Lemma}[Stability of $P$ around $P^*$]\label{Lemma1}
    There exists a $\delta_0>0$ such that for any $P \in \mathcal{B}_{\delta_0}(P^*)$, $P$ is stabilizing.
\end{Lemma}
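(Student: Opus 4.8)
**Proof proposal for Lemma \ref{Lemma1} (Stability of $P$ around $P^*$).**

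The plan is to exploit the fact that $P^*$ is stabilizing (as noted right after \eqref{Kpolicyimprovement_EQ}), meaning the closed-loop matrix $A_{K^*} := A + BK^*$ with $K^* = -(R+B^\top P^* B)^{-1}B^\top P^* A$ is Schur stable, i.e. its spectral radius $\rho(A_{K^*}) < 1$. I would then argue that the map $P \mapsto A_K(P)$, where $K(P) := -(R+B^\top P B)^{-1}B^\top P A$ is the gain associated with a candidate kernel $P$ and $A_K(P) := A + BK(P)$, is continuous on a neighborhood of $P^*$. Continuity of $K(\cdot)$ follows because $R \succ 0$ implies $R + B^\top P B \succ 0$ in a neighborhood of $P^*$, so the matrix inverse is well-defined and continuous there (matrix inversion is continuous on the open set of invertible matrices); composing with the continuous operations of matrix multiplication and addition gives continuity of $P \mapsto A_K(P)$. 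The spectral radius $\rho(\cdot)$ is itself continuous as a function of matrix entries (this is the continuity-of-eigenvalues fact cited from \cite[Chapter 6]{meyer2023matrix}), so $g(P) := \rho(A_K(P))$ is a continuous real-valued function of $P$ with $g(P^*) < 1$.

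The conclusion then follows from a standard $\varepsilon$–$\delta$ argument: since $g$ is continuous at $P^*$ and $g(P^*) < 1$, taking $\varepsilon = \tfrac{1}{2}(1 - g(P^*)) > 0$ yields a $\delta_0 > 0$ such that $\|P - P^*\|_F < \delta_0$ implies $g(P) < g(P^*) + \varepsilon = \tfrac{1}{2}(1 + g(P^*)) < 1$, i.e. $A_K(P)$ is Schur stable. One should also shrink $\delta_0$ if necessary so that $R + B^\top P B \succ 0$ holds on all of $\mathcal{B}_{\delta_0}(P^*)$, which guarantees that $K(P)$ in Definition \ref{Def1} is well-defined there; this is automatic since $P \mapsto \lambda_{\min}(R+B^\top P B)$ is continuous and positive at $P^*$. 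By Definition \ref{Def1}, any such $P$ is stabilizing, which is exactly the claim.

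I do not expect a serious obstacle here; the result is essentially a continuity/perturbation statement. The only point requiring a little care is making sure the gain $K(P)$ is well-defined throughout the ball (handled by the $\lambda_{\min}$ continuity argument above) before invoking continuity of the spectral radius, so that $g$ is genuinely continuous on a neighborhood rather than just defined pointwise. A secondary minor point is that the lemma is stated with $P \succeq 0$ implicitly (it speaks of "any $P \in \mathcal{B}_{\delta_0}(P^*)$"); if one wants to restrict to positive semidefinite $P$ the argument is unchanged since we only intersect with a closed set. Everything else is routine.
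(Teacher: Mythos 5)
Your argument is correct and is exactly the route the paper takes: the paper's (one-line) justification of Lemma \ref{Lemma1} is precisely the combination of continuity of eigenvalues (hence of the spectral radius of $\mathcal{A}(P)=A-BL(P)$, which is a continuous function of $P$ since $R\succ 0$ keeps $R+B^\top PB$ invertible near $P^*$) with the fact that $P^*$ is stabilizing, i.e. $\rho(\mathcal{A}(P^*))<1$. Your write-up simply makes the standard $\varepsilon$--$\delta$ details explicit, including the minor well-definedness point about $R+B^\top PB\succ 0$ on the ball, so there is nothing to add.
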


In the following two subsections, we investigate the exponential convergence properties of VI and PI algorithms within the region $\mathcal{B}_{\delta_0}(P^*)$. To facilitate our analysis, we introduce the following notations: 
\begin{subequations}\label{29102024}
\begin{align}
    L(P)&:=(R+B^\top P B)^{-1}B^\top P A,\\
    \mathcal{A}(P)&:=A-BL(P).
\end{align}
\end{subequations}
\subsection{Exponential Convergence of VI}
From Theorem \ref{theoremVIBasic}, the asymptotic convergence is guaranteed for any positive semidefinite $P_0$, and exponential convergence is achieved when $P_0\succeq P^*$. We now introduce new requirements for the exponential convergence of VI.

\begin{Theorem}[Local exponential convergence of VI]\label{LemmaVI}
    For any $P_i\in \mathcal{B}_{\delta_0}(P^*)$, with $\delta_0$ defined in Lemma \ref{Lemma1}, the following inequality holds:
    \begin{equation}\label{25102024}
    \lVert P_{i+1}-P^* \rVert_{P_\epsilon} \leq \alpha \lVert P_{i}-P^* \rVert_{P_\epsilon}, \quad \forall i \in \mathbb{Z}_+,
\end{equation}
where $\alpha \in (0,1)$ is a constant. Thus, the sequence $\{P_i\}$ converges exponentially to $P^*$ when $P_0\in \mathcal{B}_{\delta_0}(P^*)$.
\end{Theorem}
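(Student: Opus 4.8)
The plan is to derive the one-step estimate \eqref{25102024} from two exact matrix identities for the VI update, using Lemma \ref{Lemma1} to guarantee that the ``closed-loop'' matrix $\mathcal{A}(P_i)$ is Schur, and then to obtain the exponential-convergence claim by combining the one-step estimate with the asymptotic convergence already provided by Theorem \ref{theoremVIBasic}.

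First I would put the VI recursion in ``completed-square'' form. Writing $T(P)$ for the right-hand side of the VI update (so $P_{i+1}=T(P_i)$ and $P^{*}=T(P^{*})$), the elementary identity $Q+K^{\top}RK+(A+BK)^{\top}P(A+BK)=T(P)+(K+L(P))^{\top}(R+B^{\top}PB)(K+L(P))$ holds for every gain $K$. Substituting $K=-L(P^{*})$ into the expansion of $T(P_i)$ and $K=-L(P_i)$ into the expansion of $T(P^{*})$, and subtracting $P^{*}=T(P^{*})$ from $P_{i+1}=T(P_i)$, gives
\begin{equation}\label{eq:planVIid}
P_{i+1}-P^{*}=\mathcal{A}(P^{*})^{\top}(P_i-P^{*})\mathcal{A}(P^{*})-\widetilde E_i=\mathcal{A}(P_i)^{\top}(P_i-P^{*})\mathcal{A}(P_i)+E_i,
\end{equation}
with $\widetilde E_i=(L(P_i)-L(P^{*}))^{\top}(R+B^{\top}P_iB)(L(P_i)-L(P^{*}))\succeq 0$ and $E_i=(L(P_i)-L(P^{*}))^{\top}(R+B^{\top}P^{*}B)(L(P_i)-L(P^{*}))\succeq 0$. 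Since $P\mapsto L(P)$ is a smooth map on $\{P\succeq 0\}$ (the matrix $R+B^{\top}PB$ is invertible there, as $R\succ 0$), it is Lipschitz on the closed ball $\overline{\mathcal{B}_{\delta_0}(P^{*})}$, so $\|E_i\|,\|\widetilde E_i\|=O(\|P_i-P^{*}\|^{2})$ and $\|\mathcal{A}(P_i)-\mathcal{A}(P^{*})\|=O(\|P_i-P^{*}\|)$.

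Next I would invoke Lemma \ref{Lemma1}: any $P_i\in\mathcal{B}_{\delta_0}(P^{*})$ is stabilizing, hence $\mathcal{A}(P_i)$ is Schur; by the continuity just noted, after shrinking $\delta_0$ if necessary (the conclusion of Lemma \ref{Lemma1} still holds on any smaller ball), $\mathcal{A}(P_i)$ is close enough to $\mathcal{A}(P^{*})$ that the Lyapunov certificate $\mathcal{A}(P^{*})^{\top}P_\epsilon\mathcal{A}(P^{*})\preceq d^{2}P_\epsilon$ underlying the $P_\epsilon$-norm (cf.\ \cite[Lemma 6]{SemiCone} and the proof of Theorem \ref{theoremVIBasic}(2)) persists as $\mathcal{A}(P_i)^{\top}P_\epsilon\mathcal{A}(P_i)\preceq d'^{2}P_\epsilon$ for some $d'\in(d,1)$. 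Taking $P_\epsilon$-norms in \eqref{eq:planVIid}, using the $P_\epsilon$-norm contraction of $\mathcal{A}(P^{*})$ and $\mathcal{A}(P_i)$ as in Theorem \ref{theoremVIBasic}(2), together with $\|\widetilde E_i\|_{P_\epsilon}\le c\,\|P_i-P^{*}\|_{P_\epsilon}^{2}\le c\,\delta_0\,\|P_i-P^{*}\|_{P_\epsilon}$ (norm equivalence via $P_\epsilon\preceq I$), I expect to reach $\|P_{i+1}-P^{*}\|_{P_\epsilon}\le(\beta+c\,\delta_0)\|P_i-P^{*}\|_{P_\epsilon}$ with $\beta=\max(d,d')<1$; shrinking $\delta_0$ once more makes $\alpha:=\beta+c\,\delta_0<1$, which is \eqref{25102024}. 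For the exponential conclusion, Theorem \ref{theoremVIBasic}(1) gives $P_i\to P^{*}$, so the iterate eventually enters a small $P_\epsilon$-ball contained in $\mathcal{B}_{\delta_0}(P^{*})$ that is forward invariant under \eqref{25102024}, whence geometric decay and exponential convergence for every $P_0\in\mathcal{B}_{\delta_0}(P^{*})$.

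The hard part---and what separates this from Theorem \ref{theoremVIBasic}(2)---is the sign-indefiniteness of $P_i-P^{*}$: there the hypothesis $P_0\succeq P^{*}$ yields the clean one-sided chain $0\preceq P_{i+1}-P^{*}\preceq\mathcal{A}(P^{*})^{\top}(P_i-P^{*})\mathcal{A}(P^{*})$, whereas here the negative part of $P_i-P^{*}$ must be controlled through the second identity in \eqref{eq:planVIid}, which requires $\mathcal{A}(P_i)$---not only $\mathcal{A}(P^{*})$---to be contracting; this is exactly where Lemma \ref{Lemma1} and the continuity of $\mathcal{A}(\cdot)$ enter, and why the quadratic remainders $E_i,\widetilde E_i$ must be made small by restricting to a sufficiently small ball. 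A secondary technical nuisance is to bound the $P_\epsilon$-norm of the indefinite matrix $P_{i+1}-P^{*}$ given only the two-sided bounds in \eqref{eq:planVIid}, and to match the resulting $P_\epsilon$-contraction with the Frobenius-norm ball appearing in the statement.
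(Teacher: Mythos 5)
Your two exact ``completed-square'' identities are correct, and once the positive semidefinite remainders $E_i,\widetilde E_i$ are discarded they reduce to exactly the two-sided bound at the heart of the paper's proof, $\mathcal{A}(P_i)^{\top}(P_i-P^{*})\mathcal{A}(P_i)\preceq P_{i+1}-P^{*}\preceq \mathcal{A}(P^{*})^{\top}(P_i-P^{*})\mathcal{A}(P^{*})$ (the paper obtains the upper bound via the block-matrix/optimality argument of \cite[Lemma 4]{lee2018primal} rather than completion of squares, cf.\ \eqref{SW}). Where you genuinely diverge is in extracting the contraction factor: the paper never estimates the remainders at all; it applies \cite[Lemma 6]{SemiCone} directly to the sandwich to get $\lVert P_{i+1}-P^*\rVert_{P_\epsilon}\leq\max\{\lVert\mathcal{A}(P_i)\rVert_{P_\epsilon}^{2},\lVert\mathcal{A}(P^*)\rVert_{P_\epsilon}^{2}\}\lVert P_i-P^*\rVert_{P_\epsilon}$ and sets $\alpha:=\sup_i\lVert\mathcal{A}(P_i)\rVert_{P_\epsilon}^{2}<1$, using Schur stability of every $\mathcal{A}(P_i)$ in the ball and $\mathcal{A}(P_i)\to\mathcal{A}(P^*)$, with no shrinking of $\delta_0$ and no quadratic error bounds. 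You instead keep $\widetilde E_i$, bound it by $O(\lVert P_i-P^*\rVert^{2})$ through Lipschitz continuity of $L(\cdot)$, and shrink the radius so that $\beta+c\,\delta_0<1$, recovering exponential convergence on the original ball via asymptotic convergence plus eventual entry into the smaller forward-invariant ball. The trade-off: your perturbation-style finish is more quantitative and sidesteps the delicate point in the paper's argument that a single fixed $P_\epsilon$ certifies $\lVert\mathcal{A}(P_i)\rVert_{P_\epsilon}<1$ for \emph{every} stabilizing iterate (you only need the Lyapunov certificate of $\mathcal{A}(P^*)$ to persist on a small neighborhood, which follows cleanly by continuity); on the other hand, it establishes the one-step inequality \eqref{25102024} only on a possibly smaller ball than the $\delta_0$ of Lemma \ref{Lemma1}, whereas the paper asserts it on all of $\mathcal{B}_{\delta_0}(P^*)$. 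Since Lemma \ref{Lemma1} is a pure existence statement, this weakening does not alter the substance of the theorem, and you flag and handle it correctly.
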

\begin{proof}
First, we define the Bellman operator $\mathcal{T}$ \cite{SemiCone} as follows:
\begin{equation}
    \mathcal{T}(P):=\mathcal{A}(P)^\top P \mathcal{A}(P) +L(P)^\top R L(P)+Q, 
\end{equation}
 which is a fixed point iteration in VI, i.e. $P_{i+1}=\mathcal{T}(P_i)$. Using this operator, a sequence $\{P_i\}$ is constructed, where $P_{i+1}=\mathcal{T}(P_i)$ with initialization $P_0$. 
Then the proof of Theorem \ref{LemmaVI} follows by establishing upper and lower bounds on the operator $\mathcal{T}(P)-P^*$, and then showing the conditions under which exponential convergence is guaranteed. 
    An upper bound of $\mathcal{T}(P)-P^*$ can be derived as:
     \begin{equation} \label{24102024} 
        \begin{split}
        &\mathcal{T}(P) - P^{*} \\
        &= 
        \begin{bmatrix}
        I  \\
        -L(P)
        \end{bmatrix}^{\top} 
        \underbrace{\begin{bmatrix}
        A^{\top}PA  & A^{\top}PB\\
        B^{\top}PA  & R+B^{\top}PB
        \end{bmatrix}}_{=:M(P)}
        \begin{bmatrix}
        I  \\
        -L(P)
        \end{bmatrix}   
        \\
        &\quad -
        \begin{bmatrix}
            I  \\
            -L(P^{*})
            \end{bmatrix}^{\top} 
           {M(P^{*})}
            \begin{bmatrix}
            I  \\
            -L(P^{*})
        \end{bmatrix}   \\
        &\preceq
        \begin{bmatrix}
            I  \\
            -L(P^{*})
        \end{bmatrix}^{\top} 
            (M(P)-M(P^*))
        \begin{bmatrix}
            I  \\
            -L(P^{*})
        \end{bmatrix}  \\
        &=\mathcal{A}(P^*)^{\top}(P-P^{*})\mathcal{A}(P^*),  
    \end{split}
     \end{equation}
    the inequality is due to the definition of $L(P)$ in \eqref{29102024} and \cite[Lemma 4]{lee2018primal}. Similarly, a lower bound can be derived by replacing $L(P^{*})$ with $L(P)$ at the first equality in \eqref{24102024} and using \cite[Lemma 4]{lee2018primal}:
    \begin{align*}
        \mathcal{T}(P) - P^{*} \succeq \mathcal{A}(P)^{\top}(P - P^{*})\mathcal{A}(P).
    \end{align*}
   From the upper bound and lower bound, we obtain the following for all $i\in \mathbb{Z}_+$:
    \begin{equation}\label{SW}
        \begin{aligned}
        \mathcal{A}(P_i)^{\top}(P_i - P^{*})\mathcal{A}(P_i)  \preceq \mathcal{T}(P_i) - P^{*}=\\
        P_{i+1}-P^*\preceq\mathcal{A}(P^*)^{\top}(P_{i} - P^{*})\mathcal{A}(P^*).\\  
        \end{aligned}
    \end{equation}
    Combining this with \cite[Lemma 6]{SemiCone}, we conclude:
    \begin{equation*}
    \begin{split}
        \lVert P_{i+1}-P^* \lVert_{P_\epsilon} \leq &\max\{\lVert \mathcal{A}(P_i) \rVert_{P_\epsilon}^2,\lVert \mathcal{A}(P^*)\rVert_{P_\epsilon}^2\}\\
        &\lVert P_{i}-P^* \lVert_{P_\epsilon}.
    \end{split}
    \end{equation*}
    By asymptotic convergence of $\{P_i\}$ from Theorem \ref{theoremVIBasic} and the definition of $\delta_0$, we have $\lim\limits_{i \rightarrow \infty}\mathcal{A}(P_i)=\mathcal{A}(P^*)$ and $\mathcal{A}(P_i)$ is Schur stable for all $i \in \mathbb{Z}_+$. Then we know $\max\{\lVert \mathcal{A}(P_i)\rVert_{P_\epsilon}^2,\lVert \mathcal{A}(P^*)\rVert_{P_\epsilon}^2\}< 1,~\forall i\in \mathbb{Z}_+$. We define $\alpha:=\sup\limits_i \{\lVert \mathcal{A}(P_i) \rVert_{P_\epsilon}^2\}$. Because $\lim\limits_{i \rightarrow \infty}\lVert \mathcal{A}(P_i) \rVert_{P_\epsilon}=\lVert \mathcal{A}(P^*) \rVert_{P_\epsilon} <1$ and $\lVert \mathcal{A}(P_i)\rVert_{P_\epsilon}< 1, \forall i \in \mathbb{Z}_+$, we have $\alpha \in (0,1)$. Then we conclude the proof of Theorem \ref{LemmaVI}. 
\end{proof} 

\begin{Remark}
    Unlike Theorem \ref{theoremVIBasic}, Theorem \ref{LemmaVI} does not require the condition $P_0 \succeq P^*$ for exponential convergence. Instead, exponential convergence is guaranteed when $P_0\in \mathcal{B}_{\delta_0}(P^*)$.  
\end{Remark}

Building on Theorem \ref{theoremVIBasic} and Theorem \ref{LemmaVI}, the following corollary establishes a larger region for local exponential convergence than what is currently available:
\begin{Corollary}[Exponential Convergence of VI]\label{theorem4}
        Defining set $\mathcal{S}:=\{P\succeq 0| P \succeq P^* \cup P\in \mathcal{B}_{\delta_0}(P^*)\}$ with $\delta_0$ defined in Lemma \ref{Lemma1}, for any $P_i \in \mathcal{S}$, we have:
    \begin{equation}
        \lVert P_{i+1}-P^* \rVert_{P_\epsilon} \leq v \lVert P_{i}-P^* \rVert_{P_\epsilon},\quad \forall i \in \mathbb{Z}_+,
    \end{equation}
    where $v \in (0,1)$ is a constant. Thus, the sequence $\{P_i\}$ converges exponentially to $P^*$ when $P_0 \in \mathcal{S}$. 
\end{Corollary}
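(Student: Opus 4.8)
The plan is to combine the two exponential-convergence results already available --- Theorem \ref{theoremVIBasic}(2) for initializations with $P_0 \succeq P^*$, and Theorem \ref{LemmaVI} for initializations in $\mathcal{B}_{\delta_0}(P^*)$ --- into a single contraction estimate valid on the union set $\mathcal{S}$. The essential structural fact that makes this work is the key sandwich inequality \eqref{SW} derived in the proof of Theorem \ref{LemmaVI}, namely
\begin{equation*}
\mathcal{A}(P_i)^{\top}(P_i - P^{*})\mathcal{A}(P_i) \preceq P_{i+1}-P^* \preceq \mathcal{A}(P^*)^{\top}(P_{i} - P^{*})\mathcal{A}(P^*),
\end{equation*}
which holds for \emph{any} $P_i$ for which $L(P_i)$ and $L(P^*)$ are well defined, and which via \cite[Lemma 6]{SemiCone} yields the bound $\lVert P_{i+1}-P^*\rVert_{P_\epsilon} \le \max\{\lVert \mathcal{A}(P_i)\rVert_{P_\epsilon}^2, \lVert \mathcal{A}(P^*)\rVert_{P_\epsilon}^2\}\,\lVert P_i - P^*\rVert_{P_\epsilon}$. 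So the only thing needed is a uniform bound strictly below $1$ on $\lVert \mathcal{A}(P_i)\rVert_{P_\epsilon}^2$ along the whole trajectory started in $\mathcal{S}$.

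First I would invoke invariance: by Theorem \ref{theoremVIBasic}(1) the VI sequence converges to $P^*$ for any $P_0\succeq 0$, and by the monotonicity part of the argument behind Theorem \ref{theoremVIBasic}(2) (the ordering $P_0 \succeq P^*$ is preserved by $\mathcal T$, since $\mathcal T$ is monotone and has $P^*$ as fixed point), if $P_0 \succeq P^*$ then $P_i \succeq P^*$ for all $i$. Likewise, on $\mathcal B_{\delta_0}(P^*)$ the contraction of Theorem \ref{LemmaVI} shows the iterates stay in $\mathcal B_{\delta_0}(P^*)$ once they enter it. Combining, $\mathcal{S}$ is forward-invariant under $\mathcal T$. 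Next, on the ``$P\succeq P^*$'' branch, Theorem \ref{theoremVIBasic}(2) already gives $\lVert P_{i+1}-P^*\rVert_{P_\epsilon}\le d\lVert P_i - P^*\rVert_{P_\epsilon}$ with a fixed $d\in(0,1)$ that depends only on the problem data; on the ``$\mathcal B_{\delta_0}(P^*)$'' branch, by Lemma \ref{Lemma1} each $\mathcal A(P_i)$ is Schur stable and $\lVert\mathcal A(P_i)\rVert_{P_\epsilon}^2 < 1$, and since the sequence converges, $\lim_i \lVert\mathcal A(P_i)\rVert_{P_\epsilon}^2 = \lVert\mathcal A(P^*)\rVert_{P_\epsilon}^2 < 1$, so $\sup_i \lVert\mathcal A(P_i)\rVert_{P_\epsilon}^2 =: \alpha < 1$ exactly as in the proof of Theorem \ref{LemmaVI}. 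Then I would set $v := \max\{d, \alpha\} \in (0,1)$; for a trajectory started anywhere in $\mathcal{S}$ the per-step factor is bounded by whichever of $d$ or $\alpha$ applies, hence by $v$, giving $\lVert P_{i+1}-P^*\rVert_{P_\epsilon}\le v\lVert P_i-P^*\rVert_{P_\epsilon}$ for all $i$, and iterating yields the claimed exponential decay.

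The main obstacle is the handedness of the ``$\sup$'' that defines $v$: $\alpha$ in Theorem \ref{LemmaVI} was defined relative to one particular trajectory, so to get a single constant $v$ that works for \emph{all} $P_0\in\mathcal{S}$ I need to argue the bound is uniform over initializations --- either by noting that $d$ from Theorem \ref{theoremVIBasic}(2) is already a data-dependent constant independent of $P_0$ and that on the compact-closure set $\overline{\mathcal B_{\delta_0}(P^*)}$ the continuous map $P\mapsto \lVert\mathcal A(P)\rVert_{P_\epsilon}^2$ attains a maximum which is $<1$ by Lemma \ref{Lemma1} and continuity (taking $\delta_0$ slightly smaller if needed so the closed ball still consists of stabilizing kernels), or by keeping $v$ trajectory-dependent in the statement's spirit, matching how $\alpha$ is used in Theorem \ref{LemmaVI}. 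The cleaner route is the compactness argument: shrink $\delta_0$ so that $\overline{\mathcal B_{\delta_0}(P^*)}\subseteq\{P : \mathcal A(P)\text{ Schur}\}$, set $\alpha := \max_{P\in\overline{\mathcal B_{\delta_0}(P^*)}}\lVert\mathcal A(P)\rVert_{P_\epsilon}^2 < 1$, and then $v:=\max\{d,\alpha\}$ is genuinely uniform. Everything else is just reassembling inequalities already proved above.
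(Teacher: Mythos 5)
Your proposal matches the paper's own proof, which is exactly the one-line combination of Theorem \ref{theoremVIBasic} and Theorem \ref{LemmaVI} with $v:=\max\{d,\alpha\}$. Your additional attention to the forward-invariance of $\mathcal{S}$ and to making $\alpha$ uniform over initializations (via the compactness argument on $\overline{\mathcal{B}_{\delta_0}(P^*)}$) addresses details the paper leaves implicit, and is a sound refinement rather than a different route.
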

The proof of Corollary \ref{theorem4} is a combination of Theorem \ref{theoremVIBasic} and Theorem \ref{LemmaVI}, where $v:=\max\{d,\alpha\}$. 

\subsection{Exponential Convergence of PI}
As outlined in Algorithm \ref{Algo2}, the PI procedure begins with an initial stabilizing control gain $K_0$ followed by the estimation of $P_0$ through the solution of a Lyapunov equation \eqref{MBBE} and continues by iterating $K_i$ and $P_i$.
To facilitate the comparison with the VI algorithm and the robustness analysis in Section \ref{Robust}, we consider the PI algorithm initialized with $P_0$ instead. For any $P_i\succeq P^*$, $K_{i+1}$ stabilizes the system $(A,B)$. Then from Theorem \ref{theoremPIBasic}, the sequence $\{P_i\}$ converges exponentially to $P^*$. 

Similarly to the analysis conducted for the VI algorithm, we investigate the convergence properties of the PI algorithm when $P_i \in \mathcal{B}_{\delta_0}(P^*)$.
\begin{Theorem}[Local exponential convergence of PI]\label{LemmaPI}
~
\begin{enumerate}
    \item For any $P_i\in \mathcal{B}_{\delta_0}(P^*)$, with $\delta_0$ defined in Lemma \ref{Lemma1}, the following inequality holds:
    \begin{equation}\label{251020241}
    \lVert P_{i+1}-P^* \rVert_F \leq \sigma_0 \lVert P_{i}-P^* \rVert_F, \quad \forall i \in \mathbb{Z}_{++},
\end{equation}
with $\sigma_0=c\in (0,1)$ defined in Theorem \ref{theoremPIBasic}. Thus, if $P_0\in \mathcal{B}_{\delta_0}(P^*)$, the sequence $\{P_i\}$ converges exponentially to $P^*$. The distance from $P^*$ decreases monotonically starting from $i=1$.
    \item There exists a constant $\delta_1 \in  (0,\delta_0]$, such that for any $P_i\in \mathcal{B}_{\delta_1}(P^*)$, the following inequality holds:
\begin{equation}
    \lVert P_{i+1}-P^* \rVert_F \leq \sigma_1 \lVert P_{i}-P^* \rVert_F,\quad  \forall i \in \mathbb{Z}_+,
\end{equation}
where $\sigma_1 \in (0,1)$. Thus, the sequence $\{P_i\}$ converges exponentially to $P^*$ when $P_0\in \mathcal{B}_{\delta_1}(P^*)$. 
\end{enumerate}
\end{Theorem}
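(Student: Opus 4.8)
The plan is to handle the two items separately. Part (1) reduces to Theorem~\ref{theoremPIBasic} via Lemma~\ref{Lemma1}, while part (2) follows from a local \emph{quadratic} contraction estimate for the PI update, which upgrades the bound so that it also holds at the very first step.

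\emph{Part (1).} Suppose $P_0\in\mathcal{B}_{\delta_0}(P^*)$. By Lemma~\ref{Lemma1} together with Definition~\ref{Def1}, $P_0$ is stabilizing, i.e. the improved gain $K_1=-L(P_0)$ stabilizes $(A,B)$. Since the recursion in Algorithm~\ref{Algo2} is deterministic, the tail $\{P_i\}_{i\ge 1}$ and the gains $\{K_i\}_{i\ge 1}$ coincide with the output of PI initialized at the stabilizing gain $K_1$. Applying Theorem~\ref{theoremPIBasic} to this tail gives $K_i$ stabilizing for all $i\ge 1$, $P_1\succeq P_2\succeq\cdots\succeq P^*$, $P_i\to P^*$, and $\lVert P_{i+1}-P^*\rVert_F\le c\,\lVert P_i-P^*\rVert_F$ for all $i\ge 1$, which is exactly \eqref{251020241} with $\sigma_0=c$. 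Finally, $0\preceq P_{i+1}-P^*\preceq P_i-P^*$ for $i\ge 1$ implies $\lambda_k(P_{i+1}-P^*)\le\lambda_k(P_i-P^*)$ for every $k$ (monotonicity of eigenvalues with respect to the Loewner order), hence $\lVert P_{i+1}-P^*\rVert_F\le\lVert P_i-P^*\rVert_F$; this is the claimed monotone decrease from $i=1$ on. Note no ball membership is needed for the tail beyond $P_0\in\mathcal{B}_{\delta_0}(P^*)$: Theorem~\ref{theoremPIBasic} only asks for a stabilizing initial gain.

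\emph{Part (2).} What part (1) leaves open is a contraction bound for the first step $P_0\mapsto P_1$. Completing the square in the DARE for the gain $K_{i+1}=-L(P_i)$ and subtracting the Lyapunov equation defining $P_{i+1}$ — the same manipulation underlying \eqref{24102024} — yields
\begin{equation*}
P_{i+1}-P^*=\mathcal{A}(P_i)^{\top}(P_{i+1}-P^*)\mathcal{A}(P_i)+(L(P_i)-L(P^*))^{\top}(R+B^{\top}P^*B)(L(P_i)-L(P^*)),
\end{equation*}
whose right-hand side is positive semidefinite. Fix any $\delta'\in(0,\delta_0)$. On the compact set $\overline{\mathcal{B}}_{\delta'}(P^*)\subset\mathcal{B}_{\delta_0}(P^*)$: $P\mapsto\mathcal{A}(P)$ is continuous and, by Lemma~\ref{Lemma1}, Schur-valued; the solution operator of a discrete Lyapunov equation depends continuously on its (Schur) coefficient, so $\lVert P_{i+1}-P^*\rVert_F$ is bounded by a uniform constant times $\lVert (L(P_i)-L(P^*))^{\top}(R+B^{\top}P^*B)(L(P_i)-L(P^*))\rVert_F$; and $L(\cdot)$ is $C^1$, hence Lipschitz on this set, so $\lVert L(P_i)-L(P^*)\rVert_F\le\ell\,\lVert P_i-P^*\rVert_F$. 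Combining these gives $\lVert P_{i+1}-P^*\rVert_F\le\kappa\,\lVert P_i-P^*\rVert_F^{2}$ on $\overline{\mathcal{B}}_{\delta'}(P^*)$ for some constant $\kappa>0$. Now choose $\delta_1\in(0,\delta']$ with $\kappa\delta_1<1$ and set $\sigma_1:=\kappa\delta_1\in(0,1)$. For $P_i\in\mathcal{B}_{\delta_1}(P^*)$ one gets $\lVert P_{i+1}-P^*\rVert_F\le(\kappa\lVert P_i-P^*\rVert_F)\,\lVert P_i-P^*\rVert_F\le\sigma_1\lVert P_i-P^*\rVert_F<\delta_1$, so $P_{i+1}$ stays in $\mathcal{B}_{\delta_1}(P^*)$ and the estimate propagates from $i=0$ onward, giving exponential convergence whenever $P_0\in\mathcal{B}_{\delta_1}(P^*)$.

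The routine ingredients are the completion-of-squares identity (already exploited in the proof of Theorem~\ref{LemmaVI}) and eigenvalue monotonicity. The main obstacle is the uniformity bookkeeping in part (2): the constant $\kappa$ must be genuinely uniform, which forces one to work on a compact sub-ball $\overline{\mathcal{B}}_{\delta'}(P^*)$ where $\mathcal{A}(P)$ stays Schur and the Lyapunov solution operator stays bounded, and one must then verify the iterates never leave $\mathcal{B}_{\delta_1}(P^*)$ so that the quadratic estimate may be reapplied. The two analytic facts doing the real work are the Lipschitz continuity of $L$ near $P^*$ and the continuous dependence of the discrete Lyapunov solution on its Schur coefficient matrix.
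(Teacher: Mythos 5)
Your proof is correct and follows essentially the same route as the paper's: part (1) reduces to Theorem~\ref{theoremPIBasic} once Lemma~\ref{Lemma1} makes the first improved gain stabilizing, and part (2) rests on the same completion-of-squares identity $P_{i+1}-P^*=\mathcal{A}(P_i)^{\top}(P_{i+1}-P^*)\mathcal{A}(P_i)+(L(P_i)-L(P^*))^{\top}(R+B^{\top}P^*B)(L(P_i)-L(P^*))$, a Lipschitz bound on $L$, and a uniform bound on the Lyapunov solution operator to obtain the quadratic estimate $\lVert P_{i+1}-P^*\rVert_F\le\kappa\lVert P_i-P^*\rVert_F^2$. The only difference is cosmetic: the paper makes the constants explicit via vectorization with $\mathcal{P}(\mathcal{A}(P_i))^{-1}$, whereas you obtain them by compactness on a closed sub-ball, which is if anything slightly more careful about uniformity.
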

The proof of Theorem \ref{LemmaPI} is provided in Appendix \ref{ProofofTheoremPI}.
\begin{Remark}\label{Remark2}
    In contrast to Theorem \ref{theoremPIBasic}, Theorem \ref{LemmaPI} does not require the condition $P_0 \succeq P^*$ for the exponential convergence. Instead, exponential convergence is guaranteed when $P_0\in \mathcal{B}_{\delta_0}(P^*)$. If $P_0 \in \mathcal{B}_{\delta_1}(P^*)$, the distance to $P^*$ decreases monotonically from the initial step $i=0$, which is necessary for the robustness analysis in Section \ref{Robust}. 
\end{Remark}

By combining Theorem \ref{theoremPIBasic} and Theorem \ref{LemmaPI}, we can derive the following corollary, which provides a larger region of initial conditions $P_0$ for which exponential convergence is guaranteed:
\begin{Corollary}[Exponential Convergence of PI]\label{theorem3}
     For any $P_i \in \mathcal{S}$, with $\mathcal{S}$ defined in Corollary \ref{theorem4}, we have:
        \begin{equation}
        \lVert P_{i+1}-P^* \rVert_F \leq c \lVert P_{i}-P^* \rVert_F,\quad \forall i \in \mathbb{Z}_{++},
    \end{equation}
    where $c \in (0,1)$ is defined in Theorem \ref{theoremPIBasic}. Thus, the sequence $\{P_i\}$ converges exponentially to $P^*$  when $P_0 \in \mathcal{S}$.
\end{Corollary}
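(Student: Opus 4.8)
The plan is to exploit the fact that, by definition, $\mathcal{S}$ is the union of the two regions already handled separately: the cone $\{P\succeq 0\mid P\succeq P^*\}$, covered globally by Theorem \ref{theoremPIBasic}, and the ball $\mathcal{B}_{\delta_0}(P^*)$, covered locally by Theorem \ref{LemmaPI}(1). The required estimate is a one-step contraction with a single constant, so it is enough to show that each iterate generated from $P_0\in\mathcal{S}$ sits, from the first step on, in a region where such a contraction with constant $c$ holds, and then to chain the inequalities along the trajectory. The only delicate point — and the reason the statement is phrased for $i\in\mathbb{Z}_{++}$ rather than $i\in\mathbb{Z}_+$ — is the initial step: an initial $P_0$ drawn from $\mathcal{B}_{\delta_0}(P^*)$ need not satisfy $P_0\succeq P^*$, so no monotone decrease is guaranteed at $i=0$.

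First I would dispose of the case $P_0\succeq P^*$. By the observation recorded just before Theorem \ref{LemmaPI}, $P_i\succeq P^*$ implies $K_{i+1}$ is stabilizing, so the modified PI recursion initialized at such a $P_0$ produces, by Theorem \ref{theoremPIBasic}, a sequence with $P_0\succeq P_1\succeq\cdots\succeq P^*$ and $\lVert P_{i+1}-P^*\rVert_F\le c\,\lVert P_i-P^*\rVert_F$ for every $i\in\mathbb{Z}_+$; a fortiori the bound holds for all $i\in\mathbb{Z}_{++}$. For the case $P_0\in\mathcal{B}_{\delta_0}(P^*)$, Lemma \ref{Lemma1} ensures $P_0$ is stabilizing, hence $K_1=L(P_0)$ is a stabilizing gain, $P_1$ is well defined, and — since the value of any stabilizing policy dominates $P^*$ in the positive semidefinite order — $P_1\succeq P^*$, so that $P_1\in\mathcal{S}$ and the first case applies to the tail $\{P_i\}_{i\ge 1}$. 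Equivalently, one may simply invoke Theorem \ref{LemmaPI}(1), which already packages this conclusion as $\lVert P_{i+1}-P^*\rVert_F\le\sigma_0\,\lVert P_i-P^*\rVert_F$ for $i\in\mathbb{Z}_{++}$ with $\sigma_0=c$. In either case $P_0\in\mathcal{S}$ yields $\lVert P_{i+1}-P^*\rVert_F\le c\,\lVert P_i-P^*\rVert_F$ for all $i\in\mathbb{Z}_{++}$, and iterating from $i=1$ gives $\lVert P_i-P^*\rVert_F\le c^{\,i-1}\lVert P_1-P^*\rVert_F$, i.e. exponential convergence to $P^*$.

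I do not expect a genuine obstacle; the corollary is a bookkeeping combination of the two preceding theorems. The single point that must be handled with care is that the contraction constant can be taken to be literally the same $c\in(0,1)$ on both pieces of $\mathcal{S}$ — so that, in contrast with the VI Corollary \ref{theorem4} where one sets $v=\max\{d,\alpha\}$, no maximum is needed here. This is precisely why Theorem \ref{LemmaPI}(1) was stated with $\sigma_0=c$ and not with a generic local rate: once the iterate enters the cone $\{P\succeq P^*\}$, which happens after at most one step from anywhere in $\mathcal{S}$, the rate reverts to the global PI rate $c$.
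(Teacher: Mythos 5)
Your proposal is correct and follows essentially the same route as the paper: the corollary is obtained by combining Theorem \ref{theoremPIBasic} (for the cone $P\succeq P^*$) with Theorem \ref{LemmaPI}(1) (for $\mathcal{B}_{\delta_0}(P^*)$, where one PI step lands the iterate in the cone so the global rate $c=\sigma_0$ applies from $i=1$ on). Your explicit handling of the initial step, and the observation that no maximum of rates is needed because $\sigma_0=c$, simply spells out what the paper leaves implicit.
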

The proof is a combination of Theorem \ref{theoremPIBasic} and Theorem \ref{LemmaPI}. We note that when $K_0$ is known a-priori to be stabilizing, the corresponding kernel $P_0$ automatically satisfies the condition $P_0 \succeq P^*$, and thus the enlargement of the exponential region provided by Theorem 4 is not of immediate use. However, it holds a significant value when the system matrices $(A,B)$ are unknown and thus the condition on $P_0$ required by Theorem \ref{theoremPIBasic} cannot be easily established a-prior. Establishing Theorem \ref{LemmaPI} is crucial for effectively analyzing the robustness of PI algorithm, which is a central theme of Section \ref{Robust}.

\subsection{Comparison between VI and PI}\label{CompareConvergence}
From the analysis in previous sections, we know that when $P_0\succeq P^*$, the sequences $\{P_i\}$ generated by both VI or PI algorithms converges exponentially to the optimal $P^*$, as graphically shown in the shaded region in Figure \ref{fig:enter-label}. In Theorem \ref{LemmaVI} and Theorem \ref{LemmaPI}, we identified the local region $\mathcal{B}_{\delta_0}(P^*)$ around $P^*$ of exponential convergence for both VI and PI. Additionally, we introduced $\mathcal{B}_{\delta_1}(P^*)$ specifically for PI, within which the distance from $P^*$ decreases monotonically for all $i \in \mathbb{Z}_+$, as illustrated in Figure \ref{fig:enter-label}. 
\begin{figure}[H]
    \centering
    \begin{tikzpicture}
    \draw[->] (0,0) -- (4,0) node[below] {$\lambda_1(P)$}; 
    \draw[->] (0,0) -- (0,4) node[left] {$\lambda_2(P)$};  
       \filldraw[gray!50, opacity=0.5] (2,2) rectangle (3.9,3.9);
    \draw[thick] (2,2) -- (2,4) ;
    \draw[thick] (2,2) -- (4,2) ;

       \draw[opacity=0.8, draw=black, thick] (2,2) circle [radius=1.5];
      \draw[opacity=0.8, draw=black, thick] (2,2) circle [radius=0.75];
    \filldraw [black] (2,2) circle (2pt) node[above right] {$P^*$};
    \node at (1.45,2.9) {\textcolor{black}{\fontsize{8}{14}\selectfont $\mathcal{B}_{\delta_0}(P^*)$}};
    \node at (2.0,1.7) {\textcolor{black}{\fontsize{8}{14}\selectfont $\mathcal{B}_{\delta_1}(P^*)$}};
    \node at (3,3.7) {\textcolor{black}{\fontsize{8}{14}\selectfont $P\succeq P^*$}};
\end{tikzpicture} 
    \caption{2-dimensional Graphic Representation}
    \label{fig:enter-label}
\end{figure}
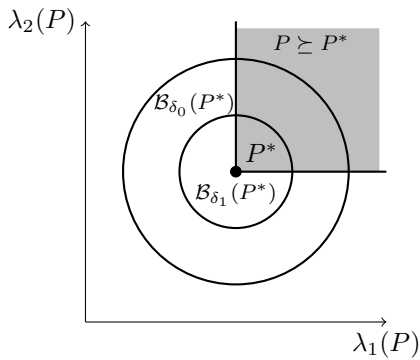
However, there is no explicit expression for $\delta_0$ (defined in Lemma \ref{Lemma1}). Theorem \ref{LemmaVI} and Theorem \ref{LemmaPI} only prove the existence of the region $\mathcal{B}_{\delta_0}(P^*)$ and $\mathcal{B}_{\delta_1}(P^*)$. Nonetheless, in certain special cases as the following one, we can provide sufficient conditions that ensure the stability of $P_0$, and we can use them to provide verifiable conditions on the initialization of VI and PI which ensure their exponential convergence beyond classic results from the literature.
\begin{Theorem}[Convergence of VI and PI with $P\succeq 0$]\label{theorem5}
    If the system matrix $A$ is Schur stable, then any $P \succeq 0$ is stabilizing. Therefore, for any $P_0 \succeq 0$, the sequences $\{P_i\}$ generated by both VI and PI converge exponentially to $P^*$.
\end{Theorem}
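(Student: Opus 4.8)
The plan is to prove the statement in two stages. Stage one establishes the geometric fact that, when $A$ is Schur, every $P\succeq 0$ is stabilizing in the sense of Definition \ref{Def1}; stage two then feeds this into Corollary \ref{theorem4} (for VI) and Corollary \ref{theorem3} (for PI) to get the convergence claims.

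For stage one, fix $P\succeq 0$, set $K:=-L(P)$ and $A_K:=\mathcal{A}(P)=A-BL(P)$, and recall that $L(P)$ minimizes $K'\mapsto K'^\top R K'+(A+BK')^\top P(A+BK')$ over all $K'$ in the Loewner order. Completing the square and taking $K'=0$ gives
\[
A_K^\top P A_K + L(P)^\top R L(P) \;\preceq\; A^\top P A ,
\qquad\text{hence}\qquad A_K^\top P A_K \preceq A^\top P A .
\]
I would then try to upgrade this to a Schur certificate for $A_K$ along one of two routes. (i) A Lyapunov route: since $A$ is Schur there is $W\succ 0$ with $A^\top W A\prec W$; writing $A_K=\Pi_P A$ with $\Pi_P:=I-B(R+B^\top P B)^{-1}B^\top P$, which one checks satisfies $\Pi_P^\top P\Pi_P\preceq P$, the aim is to assemble from $W$ and $P$ a common certificate $\tilde W\succ 0$ with $A_K^\top\tilde W A_K\prec\tilde W$. (ii) An eigenvector/detectability route: assuming $A_K v=\lambda v$ with $|\lambda|\geq 1$, note that because $R\succ 0$ and $A$ is Schur the pair $(A_K,C)$ with $C^\top C=Q+K^\top RK$ has no unobservable mode on or outside the unit circle (such a mode would be an eigenvalue of $A$), and then push $v$ through the two displayed inequalities to reach a contradiction.

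The main obstacle is precisely stage one: the inequality $A_K^\top P A_K\preceq A^\top P A$ is \emph{not} a Lyapunov inequality for $A_K$, its right-hand side being $A^\top P A$ rather than $P$, so extracting $\rho(A_K)<1$ for a genuinely arbitrary $P\succeq 0$ — in particular for $P$ that is "large" or rank-deficient relative to $A$ — is the delicate point and where I would concentrate. If the two routes above do not close cleanly, a fallback is to localize: the VI sequence $\{P_i\}$ is sandwiched between $0$ and the iterates of the zero-policy operator $X\mapsto Q+A^\top X A$, whose fixed point is finite because $A$ is Schur, so one could try to replace the global claim by the weaker statement that suffices for the second stage (all iterates stabilizing), though the clean global statement may in the end require an additional structural hypothesis on the data.

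Granting stage one, stage two is immediate. For VI, every iterate obeys $P_{i+1}=Q+\mathcal{A}(P_i)^\top P_i\mathcal{A}(P_i)+L(P_i)^\top R L(P_i)\succeq 0$, so $P_i\in\mathcal{S}$ for all $i$, and Corollary \ref{theorem4} gives $\lVert P_{i+1}-P^*\rVert_{P_\epsilon}\leq v\lVert P_i-P^*\rVert_{P_\epsilon}$ with $v\in(0,1)$. For PI, $P_0\succeq 0$ is stabilizing by stage one, so the first policy improvement produces a gain $K_1$ stabilizing $(A,B)$, and from iteration $1$ onward Theorem \ref{theoremPIBasic} (equivalently Corollary \ref{theorem3}) applies and yields $\lVert P_{i+1}-P^*\rVert_F\leq c\lVert P_i-P^*\rVert_F$ with $c\in(0,1)$. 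In both cases $\{P_i\}$ converges exponentially to $P^*$, which is the assertion of Theorem \ref{theorem5}.
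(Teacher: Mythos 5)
Your stage one is exactly where the theorem lives, and your proposal leaves it open; moreover, neither of your two routes can be closed as stated, because the claim you are trying to prove under mere Schur stability is false. Take $A=\left[\begin{smallmatrix}0&0\\2&0\end{smallmatrix}\right]$ (nilpotent, hence Schur), $B=\left[\begin{smallmatrix}1\\0\end{smallmatrix}\right]$, $R=1$, $P=\left[\begin{smallmatrix}4&4\\4&4\end{smallmatrix}\right]\succeq 0$: then $R+B^\top PB=5$, $L(P)=\left[\begin{smallmatrix}1.6&0\end{smallmatrix}\right]$, and $\mathcal{A}(P)=\left[\begin{smallmatrix}-1.6&0\\2&0\end{smallmatrix}\right]$ has eigenvalue $-1.6$, so this $P$ is not stabilizing in the sense of Definition \ref{Def1}. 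This confirms your own diagnosis: $A_K^\top PA_K\preceq A^\top PA$ is not a Lyapunov inequality for $A_K$, route (i) has no workable $\tilde W$, and route (ii) stalls because without something like $\mathcal{T}(P)\preceq P$ the eigenvector cannot be pushed to a contradiction (the detectability observation only excludes unobservable unstable modes). The "additional structural hypothesis" you suspected is precisely what the paper's proof uses: it rewrites $\mathcal{A}(P)=(I+BR^{-1}B^\top P)^{-1}A$ via the matrix inversion lemma and argues $\lVert\mathcal{A}(P)\rVert\leq\lVert (I+BR^{-1}B^\top P)^{-1}\rVert\,\lVert A\rVert<1$, i.e., it reads the hypothesis as contractivity $\lVert A\rVert<1$ (made explicit in the discussion right after the theorem), not merely $\rho(A)<1$. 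So the missing idea relative to the paper is this inversion-lemma identity combined with a norm argument; with only "Schur" in the sense of spectral radius, stage one cannot be completed at all.

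Granting stage one, your stage two for PI matches the paper: $P_0$ stabilizing gives a stabilizing $K_1$, hence $P_1\succeq P^*$ and Theorem \ref{theoremPIBasic} applies, which is exactly item 1 of Theorem \ref{LemmaPI}. For VI, however, "every iterate is $\succeq 0$, so $P_i\in\mathcal{S}$" is a non sequitur: membership in $\mathcal{S}$ requires $P\succeq P^*$ or $P\in\mathcal{B}_{\delta_0}(P^*)$, not just positive semidefiniteness, so Corollary \ref{theorem4} cannot be invoked this way. What is actually needed (and what the paper uses by invoking Theorem \ref{LemmaVI}) is that once every PSD matrix is stabilizing, each $\mathcal{A}(P_i)$ is Schur, so the sandwich bound in the proof of Theorem \ref{LemmaVI} delivers the contraction directly; the ball $\mathcal{B}_{\delta_0}(P^*)$ there served only to guarantee that the iterates are stabilizing. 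Replacing the appeal to $\mathcal{S}$ by this observation fixes the VI step, but the substantive gap remains stage one.
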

\begin{proof}
    Recalling the definition in \eqref{29102024}, we have that:
    \begin{equation*}
        \begin{split}
            \mathcal{A}(P)&= A - B(R + B^{\top}PB)^{-1}B^{\top}PA\\
            &= (BR^{-1}B^{\top}P + I)^{-1}A.
        \end{split}
    \end{equation*}
    Since $\lVert BR^{-1}B^{\top}P \rVert \geq  0$, because $BR^{-1}B^{\top}\succeq 0$ and $P\succeq 0$, it follows that: $\lVert (BR^{-1}B^{\top}P + I)^{-1} \rVert \leq  1$. Therefore, we have:
    \begin{equation*}
            \lVert A-BL(P) \rVert \leq  \lVert (BR^{-1}B^{\top}P+ I)^{-1}\rVert \lVert A \rVert < 1.
    \end{equation*}
    This implies that if $A$ is Schur stable, any positive semidefinite matrix $P$ is stabilizing. When $P_0 \succeq 0$, then $P_0$ is stabilizing, we can invoke Theorem \ref{LemmaVI} to establish that the sequence $\{P_i\}$ generated by the VI converges exponentially to the optimal value $P^*$. In the case of PI, we can conclude from Theorem \ref{LemmaPI} that the sequence $\{P_i\}$ converges to the optimal exponentially and the distance from $P^*$ decreases monotonically starting from $i=1$.  
\end{proof}
Theorem \ref{theorem5} enables the initialization of any $P_0 \succeq 0$ in both VI and PI algorithms when $\lVert A \rVert <1$. This theorem provides the exponential convergence guarantee for such an initialization for VI. For PI, convergence is also guaranteed, eliminating the need for an initializing stabilizing policy gain $K_0$ and instead allowing for initialization with any $P_0\succeq 0$.
\section{Robust Analysis of VI and PI}\label{Robust}
In the previous section, we analyzed the exponential convergence properties of VI and PI algorithms based on the assumption of perfect knowledge of the system matrices $(A,B)$. However, in practical scenarios, the system matrices are often unknown or only partially known. Therefore, we consider the case where, at each iteration $i$ of the algorithm, VI and PI employ estimates $\hat A_i$ and $\hat B_i$ in place of $A$ and $B$, respectively. We denote the differences between them as:
\begin{equation}
    \Delta A_i:= \hat{A}_i-A,~\Delta B_i:= \hat{B}_i-B.
\end{equation}
For the analysis in this section, we introduce two scalar sequences $\{a_i\}$ and $\{b_i\}$, whose entries are defined as:
\begin{equation}\label{251020242}
    a_i:=\lVert \Delta A_i\rVert_F,~ b_i:=\lVert \Delta B_i\rVert_F.
\end{equation}
This setting captures the case where a fixed model estimate is used ($\hat A_i=\hat A$ and $\hat B_i=\hat B$ for all $i\in \mathbb{Z}_+$) but also the more interesting case where a running estimate of the model is updated throughout the design process. The latter scenario arises for example in model-based RL \cite[Section 5]{annurev:/content/journals/10.1146/annurev-control-062922-090153}, and indirect data-driven control \cite{10383604}, where a system identification algorithm uses collected data to update online an estimate of the model. Given the use of VI and PI algorithms as building blocks of complex learning-based schemes \cite{IJRNC}, it is crucial to analyze their robustness in the face of inexact estimates of the system matrices.
\subsection{Robustness of Inexact VI}
The procedure of inexact VI algorithm formulated by estimate system matrices is given in Algorithm \ref{Algo3}. 
\begin{algorithm}[H]
  \caption{Value Iteration with Estimates $(\hat{A}_i,\hat{B}_i)$}\label{Algo3}
  \begin{algorithmic}
      \Require $\{\hat{A}_i\}$$\{\hat{B}_i\}$, a stabilizing $\hat{P}_0$ 
      \For{$i=0,...,+\infty$}\\
        $\hat{P}_{i+1}=\hat{A}_i^\top \hat{P}_i \hat{A}_i+Q-\hat{A}_i^\top \hat{P}_i \hat{B}_i (R+ \hat{B}_i^\top \hat{P}_i \hat{B}_i)^{-1} \hat{B}_i^\top \hat{P}_i \hat{A}_i$
      \EndFor
  \end{algorithmic}
\end{algorithm}
Note that the initial matrix $\hat{P}_0$ must be stabilizing for the true system. The following theorem analyzes the convergence properties of Algorithm \ref{Algo3}.
\begin{Theorem}[Robustness of VI]\label{TheoremRVI}
    Given $\alpha$ and $\delta_0$ as defined in Theorem \ref{LemmaVI} and Lemma \ref{Lemma1}, there always exist constants $\Bar{a}_{v}(\delta_0,\alpha)\geq 0$ and $\Bar{b}_{v}(\delta_0,\alpha)\geq 0$ such that if $\|a\|_\infty\leq \Bar{a}_{v}$, $\|b\|_\infty\leq \Bar{b}_{v}$ and $\hat{P}_0\in \mathcal{B}_{\delta_0}(P^*)$, where sequences $\{a_i\}$ and $\{b_i\}$ are defined in \eqref{251020242}, then:
    \begin{enumerate}
        \item $\hat{P}_i$ is stabilizing,  $\forall i \in \mathbb{Z}_+$;
        \item the following holds:
        \begin{equation}
        \begin{split}
            \|\hat{P}_{i} - P^*\|_{P_\epsilon} \leq \beta_{1}(\|\hat{P}_{0} - P^*\|_{P_\epsilon}, i) \\
            +\gamma_{1}(\|a\|_{\infty}) + \gamma_{2}(\|b\|_{\infty}),\quad \forall i \in \mathbb{Z}_+,
        \end{split}
        \end{equation}
        where $\beta_{1}(x, i) := \alpha^{i}x$; $\gamma_{1}(x) := \frac{\bar{v}_a}{1-\alpha}x$; $\gamma_{2}(x) := \frac{\bar{v}_b}{1-\alpha}x$ with constants $\bar{v}_a,~ \bar{v}_b >0$;
        \item if $\lim\limits_{i \to \infty} \|\Delta A_i\|_F = 0 $ and $\lim\limits_{i \to \infty} \|\Delta B_i\|_F = 0 $, then $\lim\limits_{i \to \infty} \|\hat{P}_{i} - P^*\|_{P_\epsilon} = 0 $.
    \end{enumerate}
\end{Theorem}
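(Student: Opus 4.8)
The plan is to treat the inexact VI update as a perturbation of the nominal Bellman operator $\mathcal{T}$ and to prove an input-to-state-stability (ISS) type estimate by induction on $i$. First I would fix the perturbed map $\hat{\mathcal{T}}_i(P):=\mathcal{A}_i(P)^\top P\,\mathcal{A}_i(P)+L_i(P)^\top R\,L_i(P)+Q$, where $L_i$ and $\mathcal{A}_i$ are the analogues of $L$ and $\mathcal{A}$ in \eqref{29102024} with $(A,B)$ replaced by $(\hat A_i,\hat B_i)$, so that $\hat P_{i+1}=\hat{\mathcal{T}}_i(\hat P_i)$. The key decomposition is
\begin{equation*}
\hat P_{i+1}-P^* = \bigl(\hat{\mathcal{T}}_i(\hat P_i)-\mathcal{T}(\hat P_i)\bigr) + \bigl(\mathcal{T}(\hat P_i)-P^*\bigr).
\end{equation*}
The second bracket is controlled by Theorem \ref{LemmaVI}: as long as $\hat P_i\in\mathcal{B}_{\delta_0}(P^*)$, we have $\lVert \mathcal{T}(\hat P_i)-P^*\rVert_{P_\epsilon}\le \alpha\lVert \hat P_i-P^*\rVert_{P_\epsilon}$. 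For the first bracket I would show it is Lipschitz in the model error: on the compact set $\overline{\mathcal{B}_{\delta_0}(P^*)}$ all the quantities $M(P)$, $(R+B^\top PB)^{-1}$, $L(P)$, $\mathcal{A}(P)$ are continuously differentiable in $(A,B,P)$ and hence Lipschitz, so there are constants $\bar v_a,\bar v_b>0$ with $\lVert \hat{\mathcal{T}}_i(P)-\mathcal{T}(P)\rVert_{P_\epsilon}\le \bar v_a\lVert\Delta A_i\rVert_F+\bar v_b\lVert\Delta B_i\rVert_F=\bar v_a a_i+\bar v_b b_i$ for all $P\in\overline{\mathcal{B}_{\delta_0}(P^*)}$. (One can get these by the resolvent identity for the inverse term and bilinearity for the quadratic terms; I would not grind through the algebra but just invoke smoothness on the compact set.)

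Combining the two estimates gives the one-step recursion
\begin{equation*}
\lVert \hat P_{i+1}-P^*\rVert_{P_\epsilon} \le \alpha \lVert \hat P_i-P^*\rVert_{P_\epsilon} + \bar v_a a_i + \bar v_b b_i,
\end{equation*}
valid whenever $\hat P_i\in\mathcal{B}_{\delta_0}(P^*)$. Iterating from $i=0$ yields
\begin{equation*}
\lVert \hat P_i-P^*\rVert_{P_\epsilon} \le \alpha^i \lVert \hat P_0-P^*\rVert_{P_\epsilon} + \sum_{k=0}^{i-1}\alpha^{i-1-k}(\bar v_a a_k+\bar v_b b_k) \le \alpha^i \lVert \hat P_0-P^*\rVert_{P_\epsilon} + \frac{\bar v_a}{1-\alpha}\lVert a\rVert_\infty + \frac{\bar v_b}{1-\alpha}\lVert b\rVert_\infty,
\end{equation*}
which is exactly claim 2 with $\beta_1(x,i)=\alpha^i x$, $\gamma_1(x)=\tfrac{\bar v_a}{1-\alpha}x$, $\gamma_2(x)=\tfrac{\bar v_b}{1-\alpha}x$. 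For claim 1 (invariance of $\mathcal{B}_{\delta_0}(P^*)$, hence stabilizing $\hat P_i$ by Lemma \ref{Lemma1}), I would close the induction: assuming $\hat P_0,\dots,\hat P_i$ all lie in $\mathcal{B}_{\delta_0}(P^*)$, the displayed bound gives $\lVert \hat P_{i+1}-P^*\rVert_{P_\epsilon}\le \alpha\delta_0' + \bar v_a\lVert a\rVert_\infty+\bar v_b\lVert b\rVert_\infty$ (with $\delta_0'$ a bound on $\lVert\hat P_0-P^*\rVert_{P_\epsilon}$ translated into the $P_\epsilon$-norm; since $P_\epsilon\preceq I$, $\lVert\cdot\rVert_{P_\epsilon}\le\lVert\cdot\rVert\le\lVert\cdot\rVert_F$, and conversely $\lVert\cdot\rVert_F$ is bounded by a constant times $\lVert\cdot\rVert_{P_\epsilon}$ on the relevant set, so norm equivalence lets me translate the $F$-ball $\mathcal{B}_{\delta_0}$ into a $P_\epsilon$-ball and back). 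Choosing $\bar a_v,\bar b_v$ small enough that $\alpha\delta_0'+\bar v_a\bar a_v+\bar v_b\bar b_v$ still corresponds to a point inside $\mathcal{B}_{\delta_0}(P^*)$ makes the set forward invariant, which simultaneously licenses every step of the induction and, via Lemma \ref{Lemma1}, gives claim 1.

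Claim 3 then follows from a standard asymptotic argument: fix $\varepsilon>0$, pick $N$ with $\bar v_a a_k+\bar v_b b_k<\tfrac{(1-\alpha)\varepsilon}{2}$ for $k\ge N$, split the sum in the iterated bound at $N$, note the head $\alpha^{i-N}\sum_{k<N}\dots\to 0$ and $\alpha^i\lVert\hat P_0-P^*\rVert_{P_\epsilon}\to 0$ as $i\to\infty$, while the tail is $\le\tfrac{\varepsilon}{2}$; hence $\limsup_i\lVert\hat P_i-P^*\rVert_{P_\epsilon}\le\varepsilon$, and since $\varepsilon$ is arbitrary the limit is zero. The main obstacle I anticipate is not any single estimate but the bookkeeping needed to make claims 1 and 2 mutually consistent: the contraction factor $\alpha$ from Theorem \ref{LemmaVI} is only available on $\mathcal{B}_{\delta_0}(P^*)$, so one must choose the tolerances $\bar a_v(\delta_0,\alpha),\bar b_v(\delta_0,\alpha)$ and handle the $\lVert\cdot\rVert_F$ vs.\ $\lVert\cdot\rVert_{P_\epsilon}$ norm mismatch carefully enough that the perturbed iterates provably never leave the region where the contraction holds — a chicken-and-egg loop that the induction above is designed to break.
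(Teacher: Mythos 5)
Your proof is correct and its overall architecture --- decompose the inexact update into a nominal contraction step plus a model-error term, establish forward invariance of $\mathcal{B}_{\delta_0}(P^*)$ by induction, then iterate the one-step recursion into a geometric-series ISS bound --- is exactly the paper's. Where you genuinely diverge is in how the error term is bounded. The paper first rewrites the explicit VI update as the solution of a Lyapunov equation, $\hat{P}_{i+1}=\mathcal{L}^{-1}_{\hat{\mathcal{A}}(\hat{P}_i)}(-\hat{L}(\hat P_i)^{\top}R\hat{L}(\hat P_i)-Q-\hat{E}_{i+1})$, so that it can reuse the vectorization and operator-perturbation machinery from the PI robustness literature; this forces it to introduce the auxiliary matrices $E_{i+1},\hat E_{i+1}$ (which themselves contain $\hat P_{i+1}$), to verify Schur stability of $\hat{\mathcal{A}}(\hat P_i)$ via separate constants $d_a,d_b$ so the operator is invertible, and in exchange it obtains $\bar v_a,\bar v_b$ as explicit polynomials in $(A,B,Q,R)$. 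You instead compare the two explicit Bellman operators $\hat{\mathcal{T}}_i$ and $\mathcal{T}$ directly and get the Lipschitz bound from smoothness on a compact set, which is cleaner and avoids the Lyapunov-operator detour entirely (VI is an explicit map, so nothing needs inverting), at the cost of non-constructive constants; note only that the Lipschitz constants must be fixed on an a priori compact neighborhood of $(A,B)$ before $\bar a_v,\bar b_v$ are chosen, to avoid circularity. You are also more careful than the paper on one point: the invariance step mixes the Frobenius ball $\mathcal{B}_{\delta_0}(P^*)$ with a $\lVert\cdot\rVert_{P_\epsilon}$ contraction estimate, and your explicit appeal to norm equivalence (shrinking the tolerances accordingly) is needed to make that step airtight, whereas the paper's inequality $\alpha\delta_0+\epsilon_1\le\delta_0$ silently identifies the two norms.
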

\begin{proof}
We prove each item in the theorem step by step:
\begin{enumerate}
    \item Matrix $(R + \hat{B}_i^{\top}\hat{P}_i\hat{B}_i)$ is always invertible because of $R\succ 0$. For $\hat{P}_i \in \mathcal{B}_{\delta_0}(P^*)$, $\hat{P}_i$ is stabilizing and ${\mathcal{A}}(\hat{P}_{i})$ is Schur stable. Defining $$\hat{\mathcal{A}}(\hat{P}_{i}) := \hat{A}_{i+1} - \hat{B}_{i+1}\hat{L}(\hat{P}_{i}),$$ 
$$\hat{L}(\hat{P}_{i}):=(R+\hat{B}_{i+1}^\top \hat{P}_{i} \hat{B}_{i+1})^{-1}\hat{B}_{i+1}^\top \hat{P}_{i} \hat{A}_{i+1},$$
we know that $\hat{\mathcal{A}}(\hat{P}_{i})$ is a continuous function of $\hat{A}_{i+1}$ and $\hat{B}_{i+1}$. By continuity there exist constants $d_a>0$ and $d_b>0$ such that $\hat{\mathcal{A}}(\hat{P}_{i})$ is Schur stable when $\|\Delta A_{i+1}\|_F \leq d_a$, $\|\Delta B_{i+1}\|_F \leq d_b$ for all $i\in \mathbb{Z}_+$.
Then we derive the relation for the VI algorithm based on the estimates system matrices where the invertibility of the operator $\mathcal{L}^{-1}$ defined in \eqref{PIrel} is guaranteed by Schur stable matrices ${\mathcal{A}}(\hat{P}_{i})$ and ${\hat{\mathcal{A}}}(\hat{P}_{i})$:
\begin{equation}
    \begin{aligned}
        \hat{P}_{i+1}& = \mathcal{L}^{-1}_{\hat{\mathcal{A}}(\hat{P}_{i})}(-\hat{L}(\hat{P}_{i})^{\top}R\hat{L}(\hat{P}_{i}) - Q - \hat{E}_{i+1}) \\
            &= \mathcal{L}^{-1}_{\mathcal{A}(\hat{P}_{i})}(-{L}(\hat{P}_{i})^{\top}R{L}(\hat{P}_{i}) - Q - E_{i+1}) \\
            &\quad+\mathcal{E}_{\textbf{VI}}(\Delta A_i, \Delta B_i),
            \label{eq:simplified_relation_Piplus1_VI_hat}   
    \end{aligned}
\end{equation}
where
\begin{align*} 
    &E_{i+1} := \mathcal{A}(\hat{P}_{i})^{\top} (\hat{P}_{i} - \hat{P}_{i+1}) \mathcal{A}(\hat{P}_{i}) ; \\
   & \hat{E}_{i+1} := \hat{\mathcal{A}}(\hat{P}_{i})^{\top} (\hat{P}_{i} - \hat{P}_{i+1}) \hat{\mathcal{A}}(\hat{P}_{i}); 
\end{align*}
    we denote $\mathcal{E}_{\textbf{VI}}(\Delta A_i,  \Delta B_i)$ as the difference between the $i$-th iteration step of true system and estimate system:
\begin{equation}
    \begin{aligned}
    &\mathcal{E}_{\textbf{VI}}(\Delta A_i,  \Delta B_i) := \\
    &\quad \mathcal{L}^{-1}_{\hat{\mathcal{A}}(\hat{P}_{i})}(-\hat{L}(\hat{P}_{i})^{\top} R \hat{L}(\hat{P}_{i}) - Q - \hat{E}_{i+1}) \\
    &\quad - \mathcal{L}^{-1}_{\mathcal{A}(\hat{P}_{i})}(-L(\hat{P}_{i})^{\top} R L(\hat{P}_{i}) - Q - E_{i+1}).  \label{eq:error_Piplus1_hat_VI}   
    \end{aligned}
\end{equation}
    Using the inequality in \cite[Equation 19]{RobustPI}, and similar to the proof of \cite[Lemma 5]{RobustPI}, it can be verified that:
    \begin{equation}
        \begin{aligned}
        \|\Delta \mathcal{P}^{\textbf{VI}}_i\|_F&:= \|\mathcal{P}(\hat{\mathcal{A}}(\hat{P}_{i})) - \mathcal{P}(\mathcal{A}(\hat{P}_{i}))\|_F \\ &\leq a_1 \|\Delta A_i\|_F+a_2 \|\Delta B_i\|_F,\\
        \end{aligned}
    \end{equation}
        \begin{equation}
        \begin{aligned}
        \|\Delta \mathcal{V}^{\textbf{VI}}_i\|_F:&= \|vec(L(\hat{P}_{i}) R L(\hat{P}_{i}) - \hat{L}(\hat{P}_{i})^{\top}R\hat{L}(\hat{P}_{i})) \\
        &+ vec(E_{i+1} - \hat{E}_{i+1})\|_F\\
        & \leq a_3\|\Delta A_i\|_F+a_4\|\Delta B_i\|_F,\\
        \end{aligned}
    \end{equation}
    where $a_1, a_2, a_3, a_4$ are polynomials of $(A, B, Q, R)$ and they can be computed using matrix multiplication. For the detailed computation steps and derivation of these polynomials, we refer to \cite[Appendix D6]{IJRNC}.
    Using these results, combined with the inequality in \cite[Equation 19]{RobustPI}, we can obtain:
    \begin{equation}
        \begin{aligned}
        &\|\mathcal{E}_{\textbf{VI}}(\Delta A_i, \Delta B_i)\|_F \\
        &= \|\mathcal{P}(\hat{\mathcal{A}}(\hat{P}_{i}))^{-1}vec(-\hat{L}(\hat{P}_{i})^{\top}R\hat{L}(\hat{P}_{i}) - Q - \hat{E}_{i+1})    \\ 
        &\quad - \mathcal{P}(\mathcal{A}(\hat{P}_{i}))^{-1}vec(-L(\hat{P}_{i})^{\top}RL(\hat{P}_{i}) - Q - E_{i+1})\|  \\
        &\leq \|\mathcal{P}(\hat{\mathcal{A}}(\hat{P}_{i}))^{-1}\|_F(\|\Delta \mathcal{V}^{\textbf{VI}}_i\|_F + \|\mathcal{P}(\mathcal{A}(\hat{P}_{i}))^{-1}\|_F  \times \\
        &\quad \|vec(-L(\hat{P}_{i})^{\top}RL(\hat{P}_{i}) - Q - E_{i+1})\|\|\Delta \mathcal{P}^{\textbf{VI}}_i\|_F) \\
        &\leq \bar{v}_a\|\Delta A_i\|_F+ \bar{v}_b\|\Delta B_i\|_F,   
        \end{aligned}
    \end{equation}
    where $\bar{v}_a,\bar{v}_b$ are polynomials of $(A, B, Q, R)$. Combining this with \cite[Lemma 6]{SemiCone}, it can be verified that:
    \begin{equation}
        \begin{aligned}
        &\|\mathcal{E}_{\textbf{VI}}(\Delta A_i, \Delta B_i)\|_{P_\epsilon}\leq \|\mathcal{E}_{\textbf{VI}}(\Delta A_i, \Delta B_i)\|_F \\
        &\leq \bar{v}_a\|\Delta A_i\|_F+ \bar{v}_b\|\Delta B_i\|_F \leq \bar{v}_a \bar{a}_v+ \bar{v}_b\bar{b}_v=:\epsilon_1.\label{eq:errorrubustVI}
        \end{aligned}
    \end{equation}
 Next, we show that if $\hat{P}_i \in \mathcal{B}_{\delta_0}(P^*)$ and $\epsilon_1 = (1-\alpha)\delta_0$, then $\hat{P}_{i+1} \in \mathcal{B}_{\delta_0}(P^*)$. According to Theorem \ref{LemmaVI}, if $P_i \in \mathcal{B}_{\delta_0}(P^*)$, then: 
    \begin{equation}
        \begin{aligned}
        &\|\hat{P}_{i+1} - P^*\|_{P_{\epsilon}} \\
        &=\|(\mathcal{L}^{-1}_{\mathcal{A}(\hat{P}_{i})}(-L(\hat{P}_{i})^{\top}RL(\hat{P}_{i}) - Q - E_{i+1}) - P^*) \\
        &+ \mathcal{E}_{\textbf{VI}}(\Delta A_i, \Delta B_i)\|_{P_{\epsilon}}  \\
        &\leq \|\mathcal{L}^{-1}_{\mathcal{A}(\hat{P}_{i})}(-L(\hat{P}_{i})^{\top}RL(\hat{P}_{i}) - Q - E_{i+1}) - P^*\|_{P_{\epsilon}} \\
        &+ \|\mathcal{E}_{\textbf{VI}}(\Delta A_i,\Delta B_i)\|_{P_{\epsilon}}    \\
        &\leq \alpha \|\hat{P}_{i} - P^*\|_{P_{\epsilon}} + \|\mathcal{E}_{\textbf{VI}}(\Delta A_i, \Delta B_i)\|_F \\
        &\leq \alpha \delta_0 + \epsilon_1 \leq \delta_0   \label{eq:rubustISSVI}
        \end{aligned}
    \end{equation}
    Then we can select $\epsilon_1 = (1-\alpha)\delta_0$ and get its associated $\Bar{a}_{v}:=\min\left\{\frac{(1-\alpha)\delta_0}{2\bar{v}_a},d_a\right\}$, $\Bar{b}_{v}:=\min\left\{\frac{(1-\alpha)\delta_0}{2\bar{v}_b},d_b\right\}$, such that $\hat{P}_i \in \mathcal{B}_{\delta_0}(P^*),~\forall i \in \mathbb{Z}_+$. This completes the first part.
    \item Furthermore, we have:
    \begin{equation}
        \begin{aligned}
        &\|\hat{P}_{i} - P^*\|_{P_{\epsilon}} \leq \alpha \|\hat{P}_{i-1} - P^*\|_{P_{\epsilon}} + \|\mathcal{E}_{\textbf{VI}}(\Delta A_i, \Delta B_i)\|_{P_{\epsilon}}   \\
        &\leq \alpha \|\hat{P}_{i-1} - P^*\|_{P_{\epsilon}} + \bar{v}_a\|\Delta A_i\|_F+ \bar{v}_b\|\Delta B_i\|_F   \\
        &\leq \alpha \|\hat{P}_{i-1} - P^*\|_{P_{\epsilon}} + \bar{v}_a\|a_v\|_\infty+ \bar{v}_b\|b_v\|_\infty   \\
        &\leq \alpha^2 \|\hat{P}_{i-2} - P^*\|_{P_{\epsilon}} + (1 + \alpha)(\bar{v}_a\|a_v\|_\infty+ \bar{v}_b\|b_v\|_\infty )  \\
    &\leq \alpha^i \|\hat{P}_{0} - P^*\|_{P_{\epsilon}} \\
    &+ (1 + \alpha + \dots + \alpha^{i-1})(\bar{v}_a\|a_v\|_\infty+ \bar{v}_b\|b_v\|_\infty ) \\
        &< \alpha^i \|\hat{P}_{0} - P^*\|_{P_{\epsilon}} + \frac{\bar{v}_a}{1-\alpha}\| a_v\|_{\infty}  + \frac{\bar{v}_b}{1-\alpha}\| b_v \|_{\infty} .   
        \end{aligned}
    \end{equation}
    This completes the second part.
    \item The third part is a standard corollary of input-to-state stability results and it can be proved for example by following the steps outlined in \cite[Appendix D3]{IJRNC}. 
\end{enumerate}
\end{proof}
\subsection{Robustness of Inexact PI}
The procedure for the inexact policy iteration algorithm is outlined in Algorithm \ref{Algo4}. 
\begin{algorithm}[H]
  \caption{Policy Iteration with Estimates $(\hat{A}_i,\hat{B}_i)$}\label{Algo4}
  \begin{algorithmic}
      \Require $\{\hat{A}_i\}$$\{\hat{B}_i\}$, a stabilizing gain $\hat{K}_0$ 
      \For{$i=0,...,+\infty$}
        \State $\hat{P}_{i}=Q+\hat{K}_i^\top R\hat{K}_{i}+(\hat{A}_i+\hat{B}_i\hat{K}_{i})^\top \hat{P}_{i}(\hat{A}_i+\hat{B}_i\hat{K}_{i})$
        \State $\hat{K}_{i+1}=-(R+\hat{B}_{i+1}^\top \hat{P}_i\hat{B}_{i+1})^{-1}\hat{B}_{i+1}^\top \hat{P}_i\hat{A}_{i+1}$
      \EndFor
  \end{algorithmic}
\end{algorithm}
The following theorem analyzes the convergence properties of Algorithm \ref{Algo4}.
\begin{Theorem}[Robustness of PI]\label{TheoremRPI}
    Given $\sigma_1$ and $\delta_1$ defined in Theorem \ref{LemmaPI}, there always exist constants $\Bar{a}_{p}(\delta_1,\sigma_1)\geq 0$ and $\Bar{b}_{p}(\delta_1,\sigma_1)\geq 0$ such that if $\|a\|_\infty\leq \Bar{a}_{p}$, $\|b\|_\infty\leq \Bar{b}_{p}$ and $\hat{P_0}\in \mathcal{B}_{\delta_1}(P^*)$, where sequences $\{a_i\}$ and $\{b_i\}$ are defined in \eqref{251020242}, then:
    \begin{enumerate}
        \item $\hat{K}_i$ is stabilizing, $\forall i \in \mathbb{Z}_{+}$;
        \item the following holds:
        \begin{equation}
        \begin{split}
            \|\hat{P}_{i} - P^*\|_F \leq \beta_{2}(\|\hat{P}_{0} - P^*\|_F, i) \\
            +\gamma_{3}(\|a\|_{\infty}) + \gamma_{4}(\|b\|_{\infty}),\quad \forall i \in \mathbb{Z}_{+},
        \end{split}
        \end{equation}
        where $\beta_{2}(x, i) := \sigma_1^{i}x$; $\gamma_{3}(x) := \frac{\bar{p}_a}{1-\sigma_1}x$; $\gamma_{4}(x) := \frac{\bar{p}_b}{1-\sigma_1}x$ with constants $\bar{p}_a, \bar{p}_b >0$;
        \item if $\lim\limits_{i \to \infty} \|\Delta A_i\|_F = 0 $ and $\lim\limits_{i \to \infty} \|\Delta B_i\|_F = 0 $, then $\lim\limits_{i \to \infty} \|\hat{P}_{i} - P^*\|_F = 0 $.
    \end{enumerate}
\end{Theorem}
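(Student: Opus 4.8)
The plan is to transpose the robustness argument for inexact VI (proof of Theorem~\ref{TheoremRVI}) to PI, with the one-step Bellman map replaced by the policy-evaluation Lyapunov solve and with the contraction estimate supplied by the second item of Theorem~\ref{LemmaPI}; the reason the statement is phrased with $\delta_1$ rather than $\delta_0$ is precisely that this contraction already holds at $i=0$, which is what lets the invariance induction below start correctly. First I would rewrite one iteration of Algorithm~\ref{Algo4} started at $\hat{P}_i$ as one iteration of the \emph{exact} PI recursion started at the same $\hat{P}_i$, plus a model-mismatch term. Writing $\hat{L}(\hat{P}_i):=(R+\hat{B}_{i+1}^\top\hat{P}_i\hat{B}_{i+1})^{-1}\hat{B}_{i+1}^\top\hat{P}_i\hat{A}_{i+1}$ (so that $\hat{K}_{i+1}=-\hat{L}(\hat{P}_i)$) and $\hat{\mathcal{A}}(\hat{P}_i):=\hat{A}_{i+1}-\hat{B}_{i+1}\hat{L}(\hat{P}_i)=\hat{A}_{i+1}+\hat{B}_{i+1}\hat{K}_{i+1}$, the policy-evaluation step is the Lyapunov equation $\hat{P}_{i+1}=\mathcal{L}^{-1}_{\hat{\mathcal{A}}(\hat{P}_i)}(-\hat{L}(\hat{P}_i)^\top R\hat{L}(\hat{P}_i)-Q)$; unlike the VI decomposition, no defect term $E_{i+1}$ is needed here because policy evaluation is itself a Lyapunov fixed point. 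The exact PI step from $\hat{P}_i$ gives $\mathcal{L}^{-1}_{\mathcal{A}(\hat{P}_i)}(-L(\hat{P}_i)^\top R L(\hat{P}_i)-Q)$, and subtracting the two defines $\mathcal{E}_{\textbf{PI}}(\Delta A_i,\Delta B_i)$. By the second item of Theorem~\ref{LemmaPI}, whenever $\hat{P}_i\in\mathcal{B}_{\delta_1}(P^*)$ the exact iterate satisfies $\|\mathcal{L}^{-1}_{\mathcal{A}(\hat{P}_i)}(-L(\hat{P}_i)^\top R L(\hat{P}_i)-Q)-P^*\|_F\le\sigma_1\|\hat{P}_i-P^*\|_F$.

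The next step is to bound $\|\mathcal{E}_{\textbf{PI}}(\Delta A_i,\Delta B_i)\|_F$ linearly in the mismatch. This reduces to: (i) the perturbation of the improved gain, $\hat{L}(\hat{P}_i)-L(\hat{P}_i)$, which is a matrix-inverse-times-product expression and is Lipschitz in $(\Delta A_i,\Delta B_i)$ with constants computable from $(A,B,Q,R)$; (ii) the resulting bound on $\|\hat{\mathcal{A}}(\hat{P}_i)-\mathcal{A}(\hat{P}_i)\|_F$; (iii) the perturbation of the Lyapunov solution operator $\mathcal{L}^{-1}_{\hat{\mathcal{A}}(\hat{P}_i)}-\mathcal{L}^{-1}_{\mathcal{A}(\hat{P}_i)}$, which, through the $vec$/Kronecker representation and the inequality in \cite[Equation 19]{RobustPI}, is controlled by (ii) together with a uniform bound on $\|\mathcal{P}(\cdot)^{-1}\|_F$ over the compact ball $\mathcal{B}_{\delta_1}(P^*)$; and (iv) the cost term $\hat{L}(\hat{P}_i)^\top R\hat{L}(\hat{P}_i)-L(\hat{P}_i)^\top R L(\hat{P}_i)$, handled via (i). Collecting these exactly as in \cite[Appendix D6]{IJRNC} and \cite[Lemma 5]{RobustPI} yields constants $\bar{p}_a,\bar{p}_b>0$ with $\|\mathcal{E}_{\textbf{PI}}(\Delta A_i,\Delta B_i)\|_F\le\bar{p}_a\|\Delta A_i\|_F+\bar{p}_b\|\Delta B_i\|_F$, hence the perturbed contraction
\begin{equation*}
\|\hat{P}_{i+1}-P^*\|_F\le\sigma_1\|\hat{P}_i-P^*\|_F+\bar{p}_a\|\Delta A_i\|_F+\bar{p}_b\|\Delta B_i\|_F,
\end{equation*}
valid as long as $\hat{P}_i$ stays in $\mathcal{B}_{\delta_1}(P^*)$. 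I expect this Lipschitz estimate, and in particular making its constants uniform over the ball while keeping track of the perturbed Lyapunov solve, to be the main technical obstacle; the rest is bookkeeping inherited from the exact-region robustness analysis of PI in \cite{IJRNC,RobustPI}.

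Items 1--3 then follow by the same pattern as in Theorem~\ref{TheoremRVI}. Since $\delta_1\le\delta_0$, Lemma~\ref{Lemma1} gives that $\mathcal{A}(\hat{P}_i)$ is Schur for $\hat{P}_i\in\mathcal{B}_{\delta_1}(P^*)$; by continuity of eigenvalues there are $d_a,d_b>0$ such that, whenever the per-step errors stay below $d_a,d_b$, both $\hat{\mathcal{A}}(\hat{P}_i)$ is Schur (so the next policy-evaluation Lyapunov equation is well posed and $\mathcal{L}^{-1}_{\hat{\mathcal{A}}(\hat{P}_i)}$ is defined) and the true closed loop $A-B\hat{L}(\hat{P}_i)$ is Schur (so $\hat{K}_{i+1}$ stabilizes the true system). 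Putting $\epsilon_2:=\bar{p}_a\bar{a}_p+\bar{p}_b\bar{b}_p$ and choosing $\bar{a}_p:=\min\{(1-\sigma_1)\delta_1/(2\bar{p}_a),d_a\}$ and $\bar{b}_p:=\min\{(1-\sigma_1)\delta_1/(2\bar{p}_b),d_b\}$ forces $\epsilon_2\le(1-\sigma_1)\delta_1$, so the perturbed contraction gives $\|\hat{P}_{i+1}-P^*\|_F\le\sigma_1\delta_1+\epsilon_2\le\delta_1$; together with $\hat{P}_0\in\mathcal{B}_{\delta_1}(P^*)$ this proves by induction that $\hat{P}_i\in\mathcal{B}_{\delta_1}(P^*)$ and $\hat{K}_i$ is stabilizing for all $i$, i.e.\ Item~1. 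Item~2 is obtained by unrolling the perturbed contraction:
\begin{equation*}
\begin{split}
\|\hat{P}_i-P^*\|_F&\le\sigma_1^i\|\hat{P}_0-P^*\|_F+\sum_{k=0}^{i-1}\sigma_1^k\big(\bar{p}_a\|a\|_\infty+\bar{p}_b\|b\|_\infty\big)\\
&<\sigma_1^i\|\hat{P}_0-P^*\|_F+\tfrac{\bar{p}_a}{1-\sigma_1}\|a\|_\infty+\tfrac{\bar{p}_b}{1-\sigma_1}\|b\|_\infty,
\end{split}
\end{equation*}
which identifies $\beta_2(x,i)=\sigma_1^i x$, $\gamma_3(x)=\tfrac{\bar{p}_a}{1-\sigma_1}x$ and $\gamma_4(x)=\tfrac{\bar{p}_b}{1-\sigma_1}x$. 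Item~3 is the standard input-to-state-stability corollary: when the mismatch inputs vanish as $i\to\infty$ the steady-state part of the bound vanishes as well, by the argument of \cite[Appendix D3]{IJRNC}.
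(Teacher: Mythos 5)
Your proposal is correct and follows essentially the same route as the paper's proof: the same decomposition of the inexact PI step into the exact Lyapunov-based recursion \eqref{PIrel} plus an error term $\mathcal{E}_{\textbf{PI}}$ bounded linearly by $\bar{p}_a\|\Delta A_i\|_F+\bar{p}_b\|\Delta B_i\|_F$, the same invariance-of-$\mathcal{B}_{\delta_1}(P^*)$ induction with $\epsilon_2=(1-\sigma_1)\delta_1$ and $\bar{a}_p,\bar{b}_p$ chosen as the same minima, and the same unrolled geometric-sum bound and ISS corollary for items 2 and 3. No gaps beyond the Lipschitz bookkeeping that the paper itself delegates to \cite{RobustPI} and \cite{IJRNC}.
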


The proof of this result follows similar arguments to that of Theorem \ref{TheoremRVI} and can be found in the Appendix \ref{ProofRobustnessPI}.

   Theorem \ref{TheoremRVI} and Theorem \ref{TheoremRPI} show that both VI and PI algorithms have an inherent robustness against uncertainties in the system matrices, when the uncertainties remain within the bounds specified in the theorems. 

\section{Simulation}\label{sec:S}
In this section, we present some numerical results\footnote{The Matlab codes used to generate these results are accessible from the repository: \url{https://github.com/col-tasas/2024-ConvergenceRobustness-VIPI}} to compare the convergence and robustness of VI and PI algorithms. We consider the following system which was already used in prior studies \cite{IJRNC,9691800}:
\begin{equation}\label{LTIsimulation}
  x_{t+1}=\underbrace{\left[\begin{array}{ccc}
            1.01 & 0.01 & 0 \\
            0.01 & 1.01 & 0.01 \\
            0 & 0.01 & 1.01 
          \end{array}\right]}_A x_t+\underbrace{\left[\begin{array}{ccc}
            1 & 0 & 0 \\
            0 & 1 & 0 \\
            0 & 0 & 1 
          \end{array}\right]}_B u_t.
\end{equation}
The weight matrices $Q$ and $R$ are set to $0.001I_3$ and $I_3$, respectively.

Figure \ref{fig:convergence} illustrates the convergence properties of the VI and PI algorithms assuming perfect knowledge of the system matrices $(A,B)$.
\begin{figure}[H]
    \centering
    \includegraphics[width=1\linewidth]{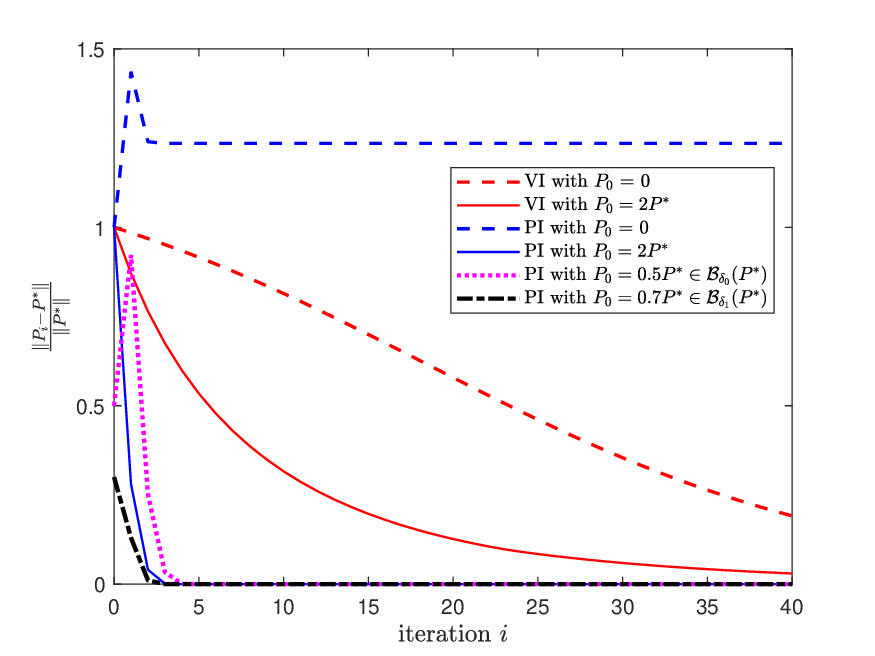}
    \caption{Convergence of VI and PI}
    \label{fig:convergence}
\end{figure}

The figure above presents six curves illustrating the convergence behavior of the VI and PI algorithms under different initializations. The blue and red solid lines depict the convergence of the VI and PI algorithms, respectively, with initial condition $P_0=2P^* \succeq P^*$, corresponding to Theorem \ref{theoremVIBasic} and Theorem \ref{theoremPIBasic}. When initialized with $P_0=0$, VI converges to the optimal solution (red dashed line), consistent with Theorem \ref{LemmaVI}, whereas PI does not (blue dashed line). In the case of a closer initialization ($P_0=0.5P^*$) to $P^*$ (magenta dotted line), the sequence $\{P_i\}$ converges to the optimal, and the distance between $P_i$ and $P^*$ decreases monotonicity after the first step, as described in item 1 of Theorem \ref{LemmaPI}. Finally, when the initialization ($P_0=0.7P^*$) is even closer to $P^*$, PI converges monotonically to the optimal solution as shown by the black dash-dotted line, in alignment with item 2 in Theorem \ref{LemmaPI}.

Next, we investigate the robustness properties of the VI and PI algorithms numerically.
\begin{figure}[H]
    \centering
    \includegraphics[width=1\linewidth]{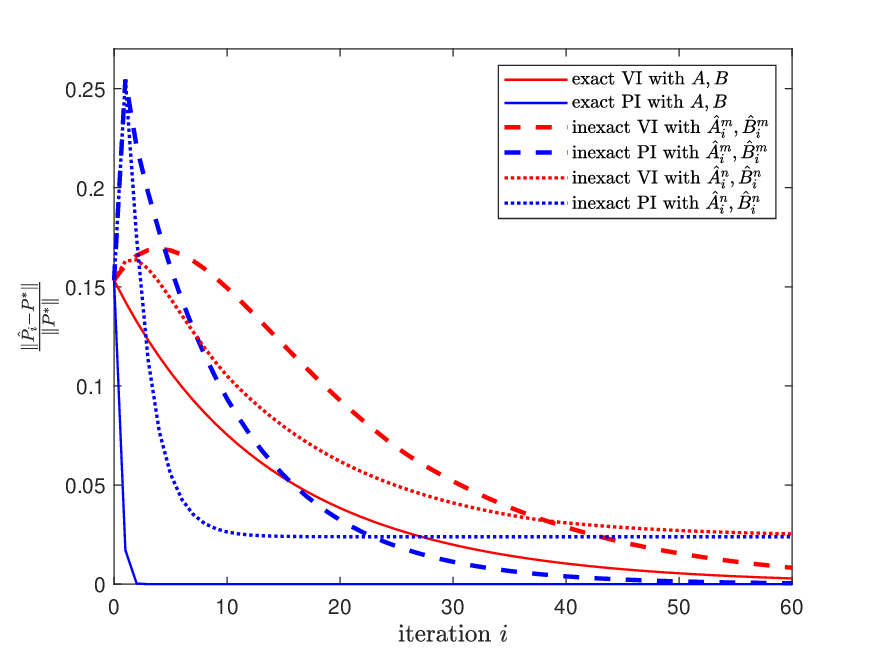}
    \caption{Robustness of VI and PI}
    \label{fig:robustness}
\end{figure}
We consider two different scenarios for the estimate system matrices used by the algorithms. In the first case, we have:
$$\hat{A}^m_i=A+0.9^i\times0.01I,~\hat{B}^m_i=B+0.9^i\times 0.01I,$$ which satisfy the conditions $\lim\limits_{i\rightarrow \infty}\hat{A}_i=A$ and $\lim\limits_{i\rightarrow \infty}\hat{B}_i=B$. From Figure \ref{fig:robustness}, it is evident that $\{\hat{P}_i\}$ converges to the optimal for both VI and PI, as established in item 3 of Theorem \ref{TheoremRVI} and Theorem \ref{TheoremRPI}. In the second case, we have
$$\hat{A}^n_i=A+(0.6^i+0.1)\times0.01I,$$ $$~\hat{B}^n_i=B+(0.6^i+0.1)\times 0.01I,$$
As expected, the algorithms converge but do not recover the optimal kernel matrix $P^*$ because of the non-vanishing mismatch in the estimate matrices.

\section{Conclusion}\label{conclusion}
This study contributes a thorough analysis of the convergence and robustness properties of value and policy iteration algorithms within the framework of the linear quadratic regulator problem. We extend the conditions for the exponential convergence of both VI and PI algorithms, which is provided in \cite{SemiCone} and, building on them, present input-to-state stability results to evaluate the robustness of the VI and PI algorithms against uncertainties in the system matrices. Additionally, we provide numerical examples to illustrate our analytical findings. In future work, we aim to integrate the robustness analysis with online system identification using noisy data to assess the performance of indirect data-driven VI and PI algorithms. 

\appendix
\subsection{Proof of Theorem \ref{LemmaPI}}\label{ProofofTheoremPI}
\begin{proof}
First, we prove the first argument in Theorem \ref{LemmaPI}. If $P_i\in \mathcal{B}_{\delta_0}(P^*)$, $K_{i+1}$ stabilizes the system as indicated by Lemma \ref{Lemma1}. It follows that $P_{i+1}\succeq P^*$, allowing us to apply Theorem \ref{theoremPIBasic} to ensure exponential convergence. Thus, we conclude that if $P_0\in \mathcal{B}_{\delta_0}(P^*)$, $\{P_i\}$ convergences to $P^*$ exponentially, with the exception of the transition from $P_0$ to $P_1$.

Now, we move to prove the second argument. For analysis purposes, we can construct a sequence ${P_i}$ by incorporating the policy improvement step within the policy evaluation step.This allows writing the evolution of $P_i$ compactly as follows:
\begin{equation}\label{relationPseq}
\begin{aligned}
   \mathcal{L}_{\mathcal{A}(P_i)} (P_{i+1})= - L(P_i)^\top R L(P_i)-Q,
\end{aligned}
\end{equation}
where $\mathcal{L}_X{(Y)}:=X^\top Y X-Y$. Following the approach outlined in \cite[Equation 9]{RobustPI}, we can verify the following relationship:
\begin{equation}\label{11}
    vec(\mathcal{L}_X(Y))=\mathcal{P}(X)vec(Y),
\end{equation}
where $\mathcal{P}(X):=X^\top \otimes X^\top-I\otimes I$. By applying this relationship to rewrite \eqref{relationPseq}, we obtain:
\begin{equation}\label{12}
    \mathcal{P}(\mathcal{A}(P_i))vec(P_{i+1})=vec(- L(P_i)^\top R L(P_i)-Q).
\end{equation}
This formulation allows us to analyze the convergence properties of the policy iteration algorithm by leveraging a discrete-time dynamical systems interpretation on the sequence $\{P_i\}$ \cite[Theorem 6]{IJRNC}. 
From Lemma \ref{Lemma1}, for any $P_i\in \mathcal{B}_{\delta_0}(P^*)$, the eigenvalues of $\mathcal{A}(P_i)$ lie in the interval $ (-1,1)$. Consequently, from the definition of $P(X)$, we know that the eigenvalues of $P(\mathcal{A}(P_i))$ lie in $ (-2,0)$, ensuring that $P^{-1}(\mathcal{A}(P_i))$ always exists.
    From \eqref{12}, we have: $$vec(P_{i+1})=\mathcal{P}^{-1}(\mathcal{A}(P_i))vec(- L(P_i)^\top R L(P_i)-Q),$$ which allows us to recursively compute the vectorized $P_{i+1}$ from $P_i$. Thus, we denote $\mathcal{L}^{-1}_{(\cdot)}{(\cdot)}$ as the inverse operator defined in \eqref{relationPseq}, which describes the relation between $P_{i+1}$ and $P_{i}$, as:
    \begin{equation}\label{PIrel}
        P_{i+1}=\mathcal{L}^{-1}_{(\mathcal{A}(P_i))}(-L(P_i)^\top R L(P_i)-Q).
    \end{equation}
     Adding the term $(- L(P_{i})^{\top}B^{\top}P^{*}A - A^{\top}P^{*}BL(P_{i}) + L(P_{i})^{\top}B^{\top}P^{*}BL(P_{i}))$ to both sides of DARE \eqref{DARE} yields:
    \begin{equation}\label{13}
    \begin{split}
       & \mathcal{L}_{(\mathcal{A}(P_i))}(P^{*}) = L(P^{*})^\top(R + B^{\top}P^{*}B)L(P^{*}) \\
       &-L(P_{i})^{\top}B^{\top}P^{*}A  - A^{\top}P^{*}BL(P_{i})\\
      & + L(P_{i})^{\top}(R+B^{\top}P^{*}B)L(P_{i}) - \mathcal{L}_{(\mathcal{A}(P_i))}(P_{i+1}) .    
    \end{split}
    \end{equation}
    From \eqref{13}, we obtain:
        \begin{equation}\label{14}
        \begin{split}
            \mathcal{L}_{(\mathcal{A}(P_i))}(P^{*} - P_{i+1}) &=\\
            (L(P^{*}) - L(P_{i}))^{\top}(R+B^{\top}&P^{*}B)(L(P^{*}) - L(P_{i})).
        \end{split}
    \end{equation}
    With the invertibility of $P(\mathcal{A}(P_i))$, we can rewrite \eqref{14} as:
    \begin{equation}\label{eq:Vectorization}
    \begin{split}
        vec(P^{*} - P_{i+1}) &=\mathcal{P}(\mathcal{A}(P_i))^{-1}\\
        vec((L(P^*) - L(P_{i}))^{\top}&(R+B^{\top}P^{*}B)(L(P^*) - L(P_{i})) ).
    \end{split}
    \end{equation}
    Taking the vector 2-norm on both sides of the equation above, we have:
    \begin{equation}
        \begin{aligned}
       & \|P^{*} - P_{i+1}\|_F \leq \|\mathcal{P}(\mathcal{A}(P_{i}))^{-1}\|_{F}\\
            & \|(R+B^{\top}P^{*}B)\|_{F}\|L(P^*) - L(P_{i})\|^2_F. \label{eq:difference_Pstar_Piplus1_norm_invec}
        \end{aligned}
    \end{equation}
    Based on the definition of $L(P)$, we derive the following:
    \begin{equation}
            \begin{aligned}
        &\|L(P^*) - L(P_{i})\|_{F}^{2} \leq \|R ^{-1}\|_F^2\|B\|_{F}^{2}\|A\|_{F}^{2} \\
        &  (1 +\|R^{-1}\|_{F}\|B\|^2_{F} \|P_i\|_{F})^{2}\|P^{*} - P_{i}\|_{F}^{2}. \label{eq:relation_Fnorm_Pnorm_squar}
        \end{aligned}
    \end{equation}
Substituting \eqref{eq:relation_Fnorm_Pnorm_squar} into \eqref{eq:difference_Pstar_Piplus1_norm_invec}, we obtain:
    \begin{equation}
        \begin{aligned}
            &\|P^{*} - P_{i+1}\|_F \leq a_0(P^{*})a_1(P_i)\|(P^{*} - P_{i})\|_{F}^{2},\\   
        \end{aligned}
    \end{equation}
    where $a_0(P^{*}):=\|(R+B^{\top}P^{*}B)\|_{F}\|R^{-1}\|^2_F\|B\|_{F}^{2}\|A\|_{F}^{2}$ and $a_1(P_i):=\|\mathcal{P}(\mathcal{A}(P_i))^{-1}\|_{F}(1 + \|R^{-1}\|\|B\|^2_{F}  \|P_i\|_{F})^{2}$. There exists a constant $c_1 > 0$, such that $a_1(P_i) \leq c_1, \forall P_i \in {\mathcal{B}}_{\delta_0}(P^*)$. Setting $c_0:=c_1a_0(P^{*})$, we can choose a sufficiently small $\delta_1 \in (0, \delta_0]$ such that $c_{0}\delta_{1} \leq 1$. Then, for any $P_i \in \mathcal{B}_{\delta_1}(P^*) \subseteq \mathcal{B}_{\delta_0}(P^*)$, we have $\|P^* - P_i\|_F < \delta_1$. Consequently, $\sigma_1 := c_0 \|P^* - P_i\|_F < c_0 \delta_1 \leq 1$. This completes the proof of Theorem \ref{LemmaPI}.
\end{proof}
\subsection{Proof of Theorem \ref{TheoremRPI}}\label{ProofRobustnessPI}

\begin{proof}
  Similar to the proofs of Theorem \ref{TheoremRVI} and \cite[Lemma 4]{RobustPI}, there exist constants $d_a>0$ and $d_b>0$ such that $\hat{\mathcal{A}}(\hat{P}_{i})$ is invertible and $\hat{K}_i$ is stabilizing, when $\|\Delta A_i\|_F \leq d_a$, $\|\Delta B_i\|_F \leq d_b$ and $\hat{P}_{i} \in \mathcal{B}_{\delta_1}(P^*)$ for all $i$. we can derive the relationship for inexact PI using \eqref{relationPseq}:
\begin{equation}
    \begin{aligned}
        \hat{P}_{i+1} = &\mathcal{L}^{-1}_{\mathcal{A}(\hat{P}_{i})}(-L(\hat{P}_{i})^{\top}RL(\hat{P}_{i}) - Q) + \mathcal{E}_{\textbf{PI}}(\Delta A_i, \Delta B_i),   
        \label{eq:simplified_relation_Piplus1_Pi_hat}   
    \end{aligned}
\end{equation}
where  
\begin{equation*}
    \begin{split}
             \mathcal{E}_{\textbf{PI}}(\Delta A_i, \Delta B_i)&:= \mathcal{L}^{-1}_{\hat{\mathcal{A}}(\hat{P}_{i})}(-\hat{L}(\hat{P}_{i})^{\top}R\hat{L}(\hat{P}_{i}) - Q) \\
            &-\mathcal{L}^{-1}_{\mathcal{A}(\hat{P}_{i})}(-{L}(\hat{P}_{i})^{\top}R{L}(\hat{P}_{i}) - Q ).
    \end{split}
\end{equation*}
Using the same techniques as in the proof of Theorem \ref{TheoremRVI} and combining with the inequality in \cite[Appendix]{RobustPI}, we obtain:
    \begin{equation}
        \begin{aligned}
        \|\mathcal{E}_{\textbf{PI}}(\Delta A_i, \Delta B_i)\|_F &\leq \bar{p}_a\|\Delta A_i\|_F+ \bar{p}_b\|\Delta B_i\|_F \\
        \leq \bar{p}_a \bar{a}_p+ \bar{p}_b\bar{b}_p&:=\epsilon_2.
        \label{eq:errorrubustPI}
        \end{aligned}
    \end{equation}
    We can also show that if $\hat{P}_i \in \mathcal{B}_{\delta_1}(P^*)$ and $\epsilon_2=(1-\sigma_1)\delta_1$, then $\hat{P}_{i+1} \in \mathcal{B}_{\delta_1}(P^*)$ holds, then we finish the proof of the first item.
    Thus, we can define $\Bar{a}_{p}:=\min\left\{\frac{(1-\sigma_1)\delta_1}{2\bar{p}_a},d_a\right\}$, $\Bar{b}_{p}:=\min\left\{\frac{(1-\alpha)\delta_1}{2\bar{p}_b},d_b\right\}$. Similar to the case of VI, we can prove 2) and 3) for PI.
\end{proof}
\bibliographystyle{unsrt}
\bibliography{ECC}
	
\end{document}